\newtheorem{optimization}{Modification}
\newtheorem{remark}{Remark}
\newtheorem{definition}{Definition}
\newtheorem{theorem}{Theorem}
\newcommand{\mygraphic}[1]{\includegraphics[height=#1]{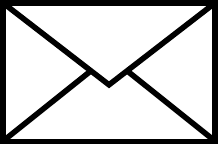}}
\newcommand{\myenv}{\raisebox{0pt}{\mygraphic{.6em}}}
\newcommand{\corresponding}[1]{#1~\myenv}
\begin{document}

\title{Multi-hop Byzantine Reliable Broadcast with Honest Dealer Made Practical
}

\date{}

\author{Silvia Bonomi$^\star$, \corresponding{Giovanni Farina$^\dagger$$^{\star}$} , S\'ebastien Tixeuil$^\dagger$\\~\\
	$^\star$Dipartimento di Ingegneria Informatica Automatica e Gestionale\\
	``Antonio Ruberti'' (DIAG),\\ Sapienza Universit\`a di Roma,
	Rome, Italy\\
	{bonomi}$@$diag.uniroma1.it\\~\\
	$^\dagger$Sorbonne Universit\'e,\\
	CNRS, LIP6,\\
	F-75005 Paris, France\\
	Giovanni.Farina$@$lip6.fr, Sebastien.Tixeuil$@$lip6.fr}

\maketitle

\begin{abstract}
	We revisit Byzantine tolerant reliable broadcast \added{with honest dealer} algorithms in multi-hop networks. To tolerate \deleted{up to $f$} Byzantine \added{faulty} nodes \added{arbitrarily spread over the network}, previous solutions require a factorial number of messages to be sent over the network if the messages are not authenticated (e.g. digital signatures are not available). We propose \replaced{modifications}{optimizations} that preserve the safety and liveness properties of the original unauthenticated protocols, while highly decreasing their observed message complexity when simulated on several classes of graph topologies, \added{potentially opening to their employment}.
	
\end{abstract}

\section*{Introduction}
Designing dependable and secure distributed systems and networks, that are able to cope with various types of adversaries (ranging from simple errors to internal or external attackers), requires to integrate those risks from the very early design stages.
The most general attack model in a distributed setting is the Byzantine model, where a subset of system participants may behave arbitrarily (including malicious), while the rest of participants remains correct. Also, reliable communication primitives are a core building block of any distributed system.

We consider the \emph{reliable broadcast \added{with honest dealer}} problem in presence of Byzantine failures \emph{i.e.}, the problem of distributing information from a \emph{source} to every other process considering that a subset of nodes may act arbitrarily.
The reliable broadcast primitive is expected to provide two guarantees: \emph{(i)} \emph{safety} \emph{i.e.}, every message $m$ delivered by a correct node has been previously sent by the source and \emph{(ii)} \emph{liveness} \emph{i.e.}, every message $m$ sent by a correct source is eventually delivered by every correct node.

We are interested in solving this problem in a \emph{multi-hop} network, in which nodes are not directly connected to every other (\emph{i.e.} the network is not complete). In particular, nodes may have to rely on intermediate ones (hops) in order to communicate, forwarding messages till the final destination.
In case the entire system is correct the solution to reliable broadcast is trivial: every node has just to forward the received messages to all of its other neighbors (or it has to question a routing table to know which is the next node to route a specific message) and if the network is connected then it is possible for every node to communicate with every other. 
Contrarily, if just one single node is faulty, specifically Byzantine faulty, two problems may arise: (\emph{i}) messages can be modified or generated by faulty nodes that pretend the messages were sent from another node (\emph{ii}) and messages can be blocked preventing nodes to communicate. It follows that a more sophisticated protocol has to be put in place to ensure the correct communication between the parties.

Lastly, we are interested in unauthenticated solutions, namely in protocols where messages cannot be directly authenticated (\emph{e.g.} employing digital signatures) and thus the nodes cannot immediately verify that a specific received message has been previously sent by a specific other node.

\added{The reliable broadcast with honest dealer enables to simulate a completely connected distributed system equipped with reliable and authenticated channels. It follows that all the solutions designed for completely connected distributed system (Byzantine Agreement, Byzantine Reliable Broadcast etc.) can be directly deployed on top of a multi-hop network once the reliable broadcast service has been deployed.}
\subsection*{Related Works}
The necessary and sufficient condition to solve the reliable broadcast \added{with honest dealer} problem on general networks has been identified by Dolev \cite{DBLP:conf/focs/Dolev81}, demonstrating that it can be solved if and only if the network is $2f+1$-connected, where $f$ is the maximum number of Byzantine \added{faulty} nodes. 

Subsequently, research efforts followed three paths: \emph{(i)} replacing global conditions with local conditions, \emph{(ii)} employing cryptographic primitives, or \emph{(iii)} considering weaker broadcast specifications.

The Certified Propagation Algorithm (CPA) \cite{pelc2005broadcasting} is a protocol that solves reliable broadcast in networks where the number of Byzantine nodes is \emph{locally bounded}, \emph{i.e.}, in any given neighborhood, at most $f$ processes can be Byzantine.
This algorithm has been later extended \cite{pagourtzis2017reliable} along several directions: (\emph{i}) considering different thresholds for each neighborhood, (\emph{ii}) considering additional knowledge about the network topology, and (\emph{iii}) considering the general adversary model.

The Byzantine tolerant reliable broadcast \added{with honest dealer} can also be addressed employing cryptography (\emph{e.g.}, digital signatures) \cite{DBLP:conf/dsn/DrabkinFS05, DBLP:conf/osdi/CastroL99} that enables all nodes to exchange messages guaranteeing authentication and integrity (authenticated protocols). 
The main advantage is that the problem can be solved with simpler solutions and weaker conditions (in terms of connectivity requirements). 
However, on the negative side, most of those solutions rely on a third party that handles and guarantees the cryptographic keys, thus  the safety of those protocols is bounded to the crypto-system (a potential single point of failure).

Lastly, the broadcast problem has been considered weakening safety and/or liveness property \emph{e.g.}, allowing to a (small) part of correct processes to either deliver fake messages, or to never deliver a valid message~\cite{maurer2014byzantine,maurer2015containing,maurer2016tolerating}.

Let us note that a common assumption considered by Byzantine tolerant reliable broadcast protocols is to use authenticated point-to-point channels, which prevents a process from impersonating \replaced{several others}{several ones} (Sybil attack) \cite{DBLP:conf/iptps/Douceur02}.
The real difference between cryptographic (authenticated) and non-cryptographic (unauthenticated) protocols for reliable broadcast is how the cryptography is employed: non-cryptographic protocols, in fact, may use digital signatures just within neighbors for authentication purposes, whereas the cryptographic protocols make use of cryptographic primitives to enable the message verification even between non-directly connected nodes. 
Let us finally remark that an authenticated channel not necessarily requires the use of cryptography \cite{DBLP:journals/wc/ZengGM10}.

\par
Although the Byzantine tolerant reliable broadcast problem \added{with honest dealer} has been extensively studied considering alternative and additional assumptions, the solution provided by Dolev \cite{DBLP:conf/focs/Dolev81} is the only one for general settings and it has never been revisited from a performance perspective. 
Indeed, this solution hints at poor scalability since it requires a factorial number of copies (with respect the size of the network) of the same message to be spread and potentially verified in order to be accepted. \deleted{by a correct node and let think that} \added{This suggest that} solving reliable broadcast in the weakest system model (\emph{i.e.}, Dolev's solution \cite{DBLP:conf/focs/Dolev81}) is practically infeasible.

\subsection*{Contributions}
We review and improve previous solutions for reliable broadcast in multi-hop networks, where at most $f$ nodes can be Byzantine faulty, making no further assumption with respect to the original setting~\cite{DBLP:conf/focs/Dolev81}.	
In more details, (\emph{i}) we propose and evaluate \added{modifications to the state of art protocols} that preserve both safety and liveness properties of the original algorithms, and (\emph{ii}) we define message selection \replaced{policies}{techniques} in order to prevent Byzantine faulty nodes from flooding the network and to reduce the total number of messages exchanged.

In a preliminary work \cite{DBLP:conf/ladc/bonomi18}, we focused on random network topologies, we defined two \replaced{modifications to the state of art protocols}{optimizations that we present in the sequel}, we proposed one preliminary message selection technique, and we carried out a performance analysis in the scenario where all processes are correct.
In this work, by extensive simulations for variously shaped networks and considering active Byzantine processes spreading spurious messages over the network, we show that our modifications enable to keep the message complexity close to quadratic (in the size of the network). Our work thus paves the way for the practical use of Byzantine tolerant reliable broadcast solutions in realistic-size networks. 

\section*{System Model and Problem Statement}
\subsection*{System Model}
We consider a distributed system composed by a set of $n$ processes $\Pi=\{p_1, p_2, \dots p_n\}$, each one having a unique integer identifier. 
Processes are arranged in a communication network. This network can be seen as an undirected \deleted{and not complete} graph $G = (V,E)$ where each node represents a process $p_i \in \Pi$ (\emph{i.e.} $V = \Pi$) and each edge represents a communication channel connecting two of them $p_i, p_j \in \Pi$ (\emph{i.e.} $E \subset \Pi \bigtimes \Pi$), such that $p_i$ and $p_j$ can communicate. In the following, we interchangeably use terms \emph{process} and \emph{node} and we refer to \emph{edges, links} and \emph{communication channels} similarly.

We assume an omniscient adversary able to control up to $f$ processes allowing them to behave arbitrarily. We call them \emph{Byzantine} processes. 
All the processes that are not Byzantine faulty are \emph{correct}. Correct processes do not a priori know the subset of Byzantine processes.
Processes have \emph{no global knowledge about the system} (\emph{i.e.} the size or the topology of the network) except the value of $f$.

Communication channels allow connected processes to exchange messages, providing two interfaces: \emph{SEND}$(p_{dest},m)$ and \emph{RECEIVE}$(p_{source},m)$.
The former requests to send a message $m$ to process $p_{dest}$, the latter delivers the message $m$ sent by process $p_{source}$. Processes that are not linked with a communication channel have to rely on others that relay their messages in order to communicate, in a \emph{multi-hop} fashion.
We assume \emph{reliable} and \emph{authenticated} communication channels, which provide the following guarantees: \emph{(i)} \emph{reliable delivery}, namely if a correct process sends a message $m$ to a correct process $q$, then $q$ eventually receives $m$; \emph{(ii)} \emph{authentication}, namely if a correct process $q$ receives a message $m$ with sender $p$, then $m$ was previously sent to $q$ by $p$.

We consider a \emph{synchronous system}, namely we assume that \emph{(i)} there is a known upper bound on the message transmission delays 
and \emph{(ii)} a known upper bound on the processing delays.
We assume a computation that evolves in sequential synchronous \emph{rounds}. Every round is divided in three phases: \emph{(1)} send, where processes send all the messages for the current round, \emph{(2)} receive, where processes receive all the messages sent at the beginning of the current round and \emph{(3)} computation, where processes execute the computation required by the specific protocol.
\added{In a single round, any message can traverse exactly one hop, namely the message exchange occurs only between neighbor processes.}
We measure the time in terms of the number of rounds.
\subsection*{Problem Statement}
We consider the \emph{reliable broadcast \added{with honest dealer}} problem from a \emph{source} $s$ assuming $f$ Byzantine failures arbitrary spread in the network \cite{DBLP:conf/focs/Dolev81}.
A protocol solves the Byzantine tolerant Reliable Broadcast (BRB) \added{with honest dealer} problem if the following conditions are met: 
\begin{itemize}
	\item \emph{(Safety)} if a \emph{correct} process delivers a message $m$, then it has been previously sent by the source; 
	\item \emph{(Liveness)} if a \emph{correct} source broadcast a message $m$, then $m$ is be eventually delivered by every correct process.
\end{itemize}
Notice that in the case of a correct source, all correct processes deliver the broadcast message.
Instead, if the source is Byzantine faulty, then not all correct processes necessarily deliver the broadcast message and/or they may deliver different messages.

We referred (following the literature) with message $m$ to a content (\emph{i.e.} a value) that has been broadcast. 
Every information spreading protocol place a content inside a message with a specific format, adding protocol related overhead. Therefore, to ease of explanation, we refer with \emph{content} to the value that has been broadcast and with \emph{message} to the one exchanged by a protocol. Therefore, a message refers to the union of a content and the overhead added by the employed protocols.

\section*{Background}
We start by presenting and remarking some definitions and theoretical results to lead the reader in an easier understanding. Subsequently, we present the state-of-the-art protocols solving the BRB problem and we provide an analysis of them.
\subsection*{Basic Definitions and Fundamental Results}
For all the definitions and results that follow, let us consider the cube graph $\hat{G}$ depicted in Figure \ref{fig:cube} as example.
\begin{figure}[!htb]
	\centering
	\includegraphics[width=.65\textwidth]{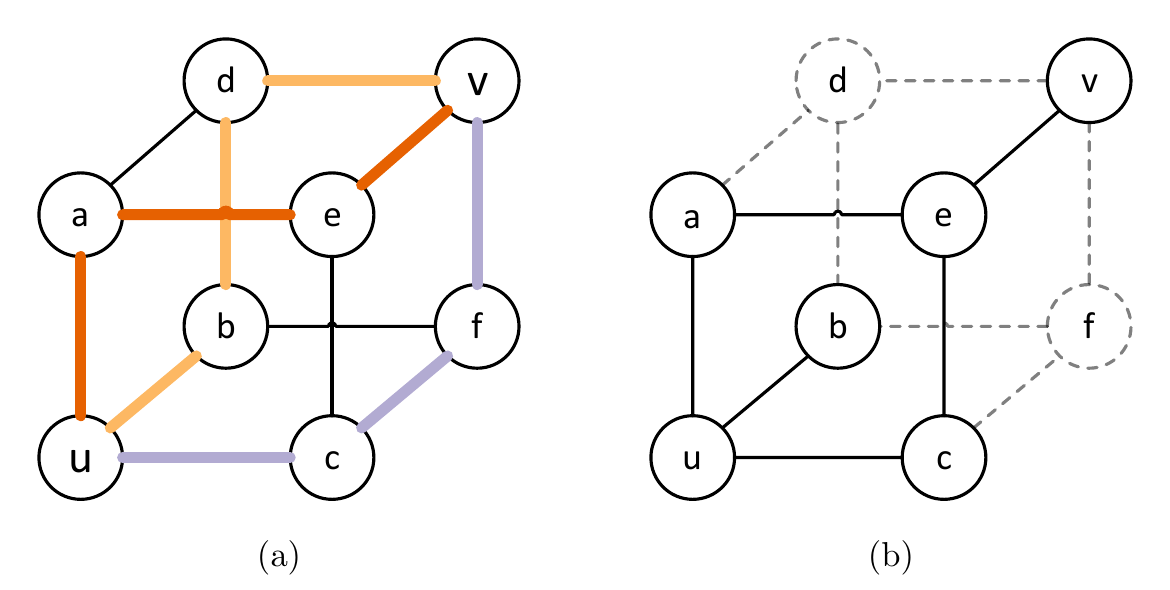}
	\caption{\textbf{Disjoint Paths (a) and Vertex Cut (b)} of a cube graph.}
	\label{fig:cube}
\end{figure}
\begin{definition}[neighbors]
	Given an undirected graph $G = (V,E)$, two nodes $u,v$ are \emph{adjacent} or \emph{neighbors} if there is an edge connecting them (\emph{i.e.} $\{u,v\} \in E$).
\end{definition}
In graph $\hat{G}$, the neighbors of node $u$ are nodes $a,b$ and $c$.
\begin{definition}[path]
	Given an undirected graph $G = (V,E)$, a \emph{path} is a sequence of adjacent nodes without repetitions (\emph{i.e.} $path := (v_1,v_2,\dots,v_x) : \forall i \in \replaced{[1,x-1]}{[1,x]}, ~ v_i \in V, ~ \{v_i,v_{i+1}\} \in E$). 
	The two extreme nodes of a path are called \emph{ends}.
\end{definition}
\begin{definition}[connected nodes and connected graph]
	Given an undirected graph $G = (V,E)$, two nodes $u,v$ are \emph{connected} if there exist at least one path with ends $u,v$, they are \emph{disconnected} otherwise. The graph $G$ is connected if it exist at least one path between every pair of nodes.
\end{definition}
In graph $\hat{G}$, the sequence $(u,a,e,v)$ is a path with ends $u$ and $v$, thus nodes $u$ and $v$ are connected.
\begin{definition}[\replaced{independent}{indipendent}/disjoint paths]
	Given an undirected graph $G = (V,E)$, two or more of its paths are \emph{independent} or \emph{disjoint} if they share no node except their ends. 
\end{definition}
In graph $\hat{G}$, the sequence $(u,c,f,v)$ is another path that is disjoint with respect to $(u,a,e,v)$.
\begin{definition}[vertex cut]
	Given an undirected graph $G = (V,E)$, the removal of a set of nodes $C \subset V$ from $G$ results in a subgraph $G_C = (V_C,E_C)$, where $V_C = V -  C$ and $E_C \subset E : \forall \{v_i,v_j\} \in E, \{v_i,v_j\} \in E_C \iff v_i,v_j \notin C$.
	Given two not adjacent nodes $u,v \in V$, a \emph{vertex cut} $C \subset V - \{u,v\}$ for $u,v$ is a set of nodes whose removal from the graph $G$ disconnects $u$ from $v$, namely $u,v$ are disconnected in $G_C$. 
\end{definition}
In graph $\hat{G}$, the set of nodes $\{d,e,f\}$ is a vertex cut \replaced{for $(u,v)$, because its removal will}{, indeed it} disconnects \replaced{nodes $u$ from $v$}{node $v$ from the other nodes}.

Given an undirected graph $G = (V,E)$ and two not adjacent nodes $u,v$, the maximum number of mutually disjoint paths with ends $u$ and $v$ is referred with $\kappa'(u, v)$, and the size of the smallest vertex cut $C$ separating $u$ from $v$ is referred with $\kappa(u, v)$.
\begin{remark} [Global Menger Theorem]
	\label{rem:gmt}
	A graph is $k$-connected (or it has \emph{vertex connectivity} equals to $k$) if and only if it contains $k$ independent paths between any two vertices.
\end{remark}
\begin{remark} [Vertex Cut VS Disjoint Paths]
	\label{rem:vcmd}
	Let $G = (V,E)$ be a graph and $u,v \in V$. Then the minimum number of vertexes that disconnects $u$ from $v$ in $G$ is equal to the maximum number of disjoint $u-v$ paths in $G$, namely  $\kappa(u,v) = \kappa'(u, v)$.
\end{remark}
In graph $\hat{G}$, the maximum number of disjoint paths between nodes $u,v$ ($\kappa'(u, v)$) is 3, as depicted in Figure  \ref{fig:cube}a. Furthermore, at least 3 nodes have to be removed from the network in order to disconnect nodes $u,v$ (Figure \ref{fig:cube}b), thus showing the equivalence $\kappa'(u, v)= \kappa(u, v)$.

The reader can refer to~\cite{Diestel_2017} for addition details.

The Byzantine Reliable Broadcast problem can be solved under the assumptions we considered in the system model when the following condition is met:
\begin{remark} [Condition for Byzantine Reliable Broadcast \cite{DBLP:conf/focs/Dolev81}]
	\label{rem:cnes}
	Given a network $G$ composed of $n$ processes where at most $f$ can be Byzantine faulty, the Byzantine Reliable Broadcast can be achieved if and only if the vertex connectivity of $G$ is at least $2f+1$.
\end{remark}
\subsection*{Byzantine Reliable Broadcast Protocols}
There exists two solutions addressing the Byzantine Reliable Broadcast \added{with honest dealer} problem in the system model we considered, that are the Dolev \cite{DBLP:conf/focs/Dolev81} and the Maurer et al. \cite{DBLP:conf/srds/MaurerTD15} algorithms. Any other solutions for the BRB problem makes extra or different assumptions (\emph{e.g.} digital signatures, higher density networks, weaker versions of safety or liveness, etc.).

The protocols that follow are defined by:
\begin{itemize}
	\item a \emph{propagation algorithm}, which rules how messages are spread over the network;
	\item a \emph{verification algorithm}, that decides if a content can be accepted by a process guaranteeing the safety of reliable broadcast. 
\end{itemize}

The basic idea behind the following protocols is to leverage the authenticated channels to collect the labels of processes traversed by a content, in order to compute the maximum disjoint paths, \replaced{in the case of Dolev}{in the first following}, or the minimum vertex cut, \replaced{in the case of Maurer et al.}{in the second following}, of all the paths traversed by the content. 
Those two methodologies are theoretically equivalent due to the Menger Theorem (Remark \ref{rem:vcmd}), \emph{i.e.} if a message can traverse $f+1$ disjoint paths in a network, then it can traverse paths such that their minimum vertex cut is $f+1$ and vice versa.

\subsubsection*{Dolev Reliable Broadcast Protocol \emph{(D-BRB)}}
Dolev \cite{DBLP:conf/focs/Dolev81} defined the seminal algorithm addressing the BRB problem.
The messages exchanged by his protocol have the format $m := \langle s,content,path\rangle$, where $s$ is the label of the process asserting to be the source, $content$ is the content to broadcast and $path$ is a sequence of nodes.

\medskip
\noindent Propagation Algorithm:

\begin{enumerate}
	\item the source process $s$ sends the message $m = \langle s,content,\emptyset \rangle$ to all of its neighbors;
	\item a correct process $p$ saves and relays any message ${m = \langle s, content, path_i \rangle}$ sent by a neighbor $q$ to all of other neighbors not included in $path_i$, appending to $path_i$ the label of the sender $q$, namely process $p$ stores and multicasts ${m = \langle s, content, path_i \cup \{q\} \rangle}$. 
	
	The messages carrying $path_i$ with loops or $path_i$ that includes the label of the receiving process $p$ are discarded at reception.
\end{enumerate}

\medskip
\noindent Verification Algorithm:
\begin{enumerate}
	\item given a set of messages $m_i = \langle s,content,path_i \rangle$ received by process $p$ and carrying the same values for $s$ and $content$, the $content$ is delivered by $p$ if there exist $f+1$ disjoint paths among of all the related $path_i$.
\end{enumerate}

\subsubsection*{Maurer et al. Reliable Broadcast Protocol \emph{(MTD-BRB)}}

Maurer et al. \cite{DBLP:conf/srds/MaurerTD15} extended and improved the algorithm defined by Dolev to deal with dynamic distributed systems, where the communication network changes over the time. As a matter of facts, a static communication network (our case) can be seen as a dynamic network that never changes, making their solution employable on the system model we are considering.

The format of messages exchanged by their protocol is \\$m:= \langle s,content,pathset \rangle$, again carrying the information about the process $s$ asserting to be the source and the $content$ of broadcast. The difference with respect the previous algorithm is the data structure employed to collect the labels of traversed nodes: $pathset$, that it discards the traversing order.  

Furthermore, \emph{MTD-BRB} verifies the minimum vertex cut of the pathset traversed by a content \replaced{instead of}{with respect} the maximum disjoint paths. The reason is that on dynamic networks the Menger theorem (Remark \ref{rem:cnes}) does not hold, specifically the minimum vertex cut may be greater than or equal to the maximum disjoint paths between two nodes \cite{DBLP:journals/jcss/KempeKK02}. 

\newpage
\noindent Propagation Algorithm:
\begin{enumerate}
	\item the source process $s$ sends the $m = \langle s,content,\emptyset \rangle$ to all of its neighbors;
	\item a correct process $p$ saves and relays any message ${m = \langle s, content, pathset_i \rangle}$ sent by a neighbor $q$ to all of other neighbors not included in $pathset_i$, appending to $pathset_i$ the label of the sender $q$, namely process $p$ stores and multicasts ${m = \langle s, content, pathset_i \cup \{q\} \rangle}$. 
	
	The messages carrying $pathset_i$ with duplicates or $pathset_i$ that includes the label of the receiving process $p$ are discarded at reception.
\end{enumerate}

\medskip
\noindent Verification Algorithm:
\begin{enumerate}
	\item given a set of messages $m_i = \langle s,content,pathset_i \rangle$ received by process $p$ and carrying the same values of $s$ and $content$, the $content$ is delivered by $p$ if the minimum vertex cut of the related $pathset_i$ is at least $f+1$.
\end{enumerate}

\subsection*{Discussion} 
The two protocols we presented are \emph{quiescent}, namely at a certain point all correct processes stop sending messages. To be precise, this is true only if all processes are correct or at least the additional knowledge about the size of the system is given to the processes to guarantee the termination of message spreading. Contrarily, a Byzantine process may continuously introduce spurious messages carrying $\tilde{path_i/pathset_i} = \{random\_label\}$ that are forwarded to all processes. We temporarily assume, for ease of evaluation, that all processes are correct in the analysis that follows.
In order to evaluate and compare the protocols reported and the solution we are going to design, we analyze the following metrics:
\begin{enumerate}
	\item \emph{message complexity} \emph{i.e.}, the total number of messages exchanged in a single broadcast (the amount of messages exchanged from the beginning of the broadcast till the moment when all processes stop sending messages);
	\item \emph{delivery computational complexity} \emph{i.e.}, the complexity of the procedure executed by a process during the computation phase to decide if a content can be accepted.
	\item \emph{broadcast latency} \emph{i.e.}, the time between the beginning of the broadcast and the time when all correct processes deliver the content.
\end{enumerate}
The message complexity of \emph{D-BRB} protocol is factorial in the size of the network. The reason is that for every path connecting the source with any other node (\emph{i.e.} that are order of the permutations over the full set of nodes) a message with related $path_i$ is generated.
This potentially results in an factorial number of $path_i$ to elaborate by every process in order to deliver a single content. 
\added{For sake of explaination, let us consider the cube graph depicted in Figure \ref{fig:cube} and let us assume that process $u$ starts a broacast, thus spreading a content with an empty $path$ that will traverse the paths $(u,a)$, $(u,b)$ and $(u,c)$ on the communication network. The neighbors of the source will receive the content and its related (empty) path, they will attach the label of the sender and they will forward it to all of their neighbors not already included, e.g. process $a$ will forwards the content with the path $(u)$ to processes $d$ and $e$ and thus a message related to the paths $(u,a,d)$ and $(u,a,e)$ will be generated (and the same will be done also by processes $b$ and $c$). The process $d$ will receive the path $(u)$ from $a$ and $b$ (the same happens for processes $e$ and $f$ from different processes). Consequently, a message carring $(u,a)$ will be forwarded by $d$ to $b$ and $v$ and a message carrying $(u,b)$ will be sent by $d$ to $a$ and $v$. The messages continue to be generated as long as all possible paths are traversed, one message for each path.}

Furthermore, to the best of our knowledge, the only method available to identify $f+1$ disjoint $path_i$ is the reduction to a NP-Complete problem, \emph{Set Packing} \cite{Garey:1990:CIG:574848}. We refer to this method with \emph{DP} (Disjoint Paths), namely to the reduction and solution of the associated Set Packing instance. This implies that the delivery complexity of the algorithm is exponential.

The \emph{D-BRB} guarantees the safety and liveness properties of BRB when the strict enabling condition is met (Remark \ref{rem:cnes}), respectively because the Byzantine processes $b_1, b_2, \dots b_f$ cannot propagate a different content $\tilde{content} \neq content$ with source $s$ through no more than $f$ disjoint paths, and assuming a vertex cut of size $f$ made by the faulty processes, $f+1$ disjoint paths are still available between any pairs of correct processes.

The \emph{MTD-BRB} protocol is equivalent with respect the message complexity and delivery complexity to \emph{D-BRB}. Specifically, even if all the $paths_i$ over the same set of nodes are all collapsed in a single $pathset$, they are still factorial in the number of nodes (\emph{i.e.} they are order of the combinations over the full set of nodes), and messages carrying any possible $pathset_i$ are generated, potentially leading to an input of factorial size for the verification algorithm.

Again, to the best of our knowledge, the only method available to identify a vertex cut of size less than or equal to $f$ is the reduction to a NP-Complete problem, \emph{Hitting Set} \cite{Garey:1990:CIG:574848}. We refer to this method as \emph{VC} (Vertex Cut), namely to the reduction and solution of the associated Hitting Set instance.
This implies that the delivery complexity of this algorithm is exponential too.

The safety and liveness properties of BRB are guaranteed by \emph{MTD-BRB} due to the same argumentation made for \emph{D-BRB}: the Byzantine processes $b_1, b_2, \dots b_f$ cannot propagate a different content $\tilde{content} \neq content$ with source $s$ through pathsets with a minimum vertex cut greater than $f$ and they cannot make a vertex cut on the communication network greater than $f$.

The broadcast latency of both protocols is bounded by the graph metric called \emph{wide diameter} \cite{hsu1994container}. \added{Given a $k$-connected graph G,} the wide diameter is the maximum number $l$ such that there exist $k$ internally disjoint $(u, v)$-paths in \deleted{a graph} $G$ of length at most $l$ between any \replaced{pair of}{two distinct} vertices $u$ and $v$. This value depends on the graph topology. In the worst case, the wide diameter of a graph is \replaced{$n-k$}{$n-k+1$} \cite{DBLP:journals/dm/HsuL94}. It follows that the broadcast latency of both protocols is upper bounded by \replaced{$n-k$}{$n-k+1$}, because in at most $n-k$ rounds $k$ disjoint path are traversed between every pair of nodes. \added{As a clarifying example, let us consider a $k$-connected generlized wheel graph with $n$ nodes, that is composed by the disjoint union between a cycle and a $k-2$ clique (a grafical example is depicted in Figure \ref{fig:alltopologies}b) and let us chose as source a node on the cycle and let us focus on one of its neighbors on the cycle at distance two. It is possible to verify that in order to interconnect the pair of nodes we are considering through $k$ disjoint paths, one path of length $n-k$ has to be traversed.}

The Table \ref{table:metrics1} summarizes the presented analysis.

\begin{table}[]
	\centering
	\bgroup
	\def\arraystretch{1.5}%
	\begin{tabular}{l|c|c|}
		\cline{2-3}
		\multicolumn{1}{c|}{}                     & Dolev & Maurer et al. \\ \hline
		\multicolumn{1}{|l|}{Message Complexity}  & \makecell{Factorial \\ (Permutations of nodes)}     & \makecell{Factorial \\ (Combinations of nodes)}           \\ \hline
		\multicolumn{1}{|l|}{Delivery Complexity} & \replaced{NP}{EXPTIME}     & \replaced{NP}{EXPTIME}            \\ \hline
		\multicolumn{1}{|l|}{Broadcast Latency}   & $\leq \replaced{n-k}{n-k+1}$     & $ \leq \replaced{n-k}{n-k+1}$             \\ \hline
	\end{tabular}
	\egroup
	\caption{Analysis of state of art protocols}
	\label{table:metrics1}
\end{table}
\section*{Practical Reliable Broadcast Protocol}
Due to the high message complexity and delivery computational complexity of the reviewed protocols, they do no scale and they cannot be successfully employed. 

We further analyze some deeper details of the aforementioned protocols and we define simple modifications that result in drastically reducing the message complexity.
Specifically, we start by arguing that \emph{pathsets} and \emph{VC} should be preferred respectively as message format and verification algorithm. Subsequently, we propose \replaced{modifications}{optimizations} that aim in reducing the amount of messages spread by preventing from forwarding useless messages, thus redefining a protocol solving BRB.

\subsection*{Paths VS Pathsets}
It is possible to note that, given the solutions available, there is no reason to prefer path over pathset while collecting the label of the traversed processes,
indeed: \emph{(i)} due to the reduction to set related problems, paths are converted into sets to be analyzed; \emph{(ii)} two paths over the same set of nodes are not disjoint and have a cut of size equal to 1, and \emph{(iii)} the pathsets interconnecting two endpoints are fewer than the relative paths. For those reasons we adopt the pathset data structure as message format to collect the labels of traversed processes in designing an improved protocol.
\subsection*{Minimum Vertex Cut VS Maximum Disjoint Paths}
We remark that both verification algorithms \replaced{solve a}{reduce to} NP-Complete problems and, considering the Menger theorem in Remark \ref{rem:vcmd}, one may conclude that there is no tangible reason to prefer one among VC and DP.
As a matter of fact, the equivalence between the two metrics in Remark \ref{rem:vcmd} occurs when no restriction on the length of the paths is assumed. \replaced{In fact,  when the path lenght is bounded,}{Contrarily,} the minimum vertex cut between two nodes may be higher than or equal to the maximum number of disjoint paths interconnecting them \cite{Lovsz1978}.
Let us take the example proposed in Figure~\ref{fig:minlessmaxx}~\cite{Lovsz1978}, let us focus on nodes $u$ and $v$ as endpoints and consider only the paths of length at most $5$. It can be verified that at least two nodes have to be removed from the graph in order to disconnect $u$ from $v$ considering only the paths of length at most $5$. \replaced{Nevertheless}{Instead}, no two disjoint paths exist considering only the paths with the same constraint. 
In other words, given a graph $G$ of $n$ nodes and considering  only the paths of length at most $l<n$, the size of the minimum vertex cut of those interconnecting two nodes may be greater than or equal to the maximum number of disjoint paths interconnecting them.
This implies that, whenever a synchronous system is assumed and the paths are all traversed synchronously like in our system model (\emph{i.e.} the path of length 1 are all traversed in 1 instant (round), the paths of length 2 are all traversed in two instant (round) and so on), it may be possible to interconnect two endpoints with a minimum cut equal to $k$ in fewer hops (\emph{i.e.} rounds) with respect $k$ disjoint paths. 
This results also in saving in message complexity if a halting condition is embedded inside the protocol, namely if the message propagation stops when all correct processes delivered the content.
For this reason we adopt VC as verification methodology.

\begin{figure}
	\centering
	\includegraphics[width=.6\textwidth]{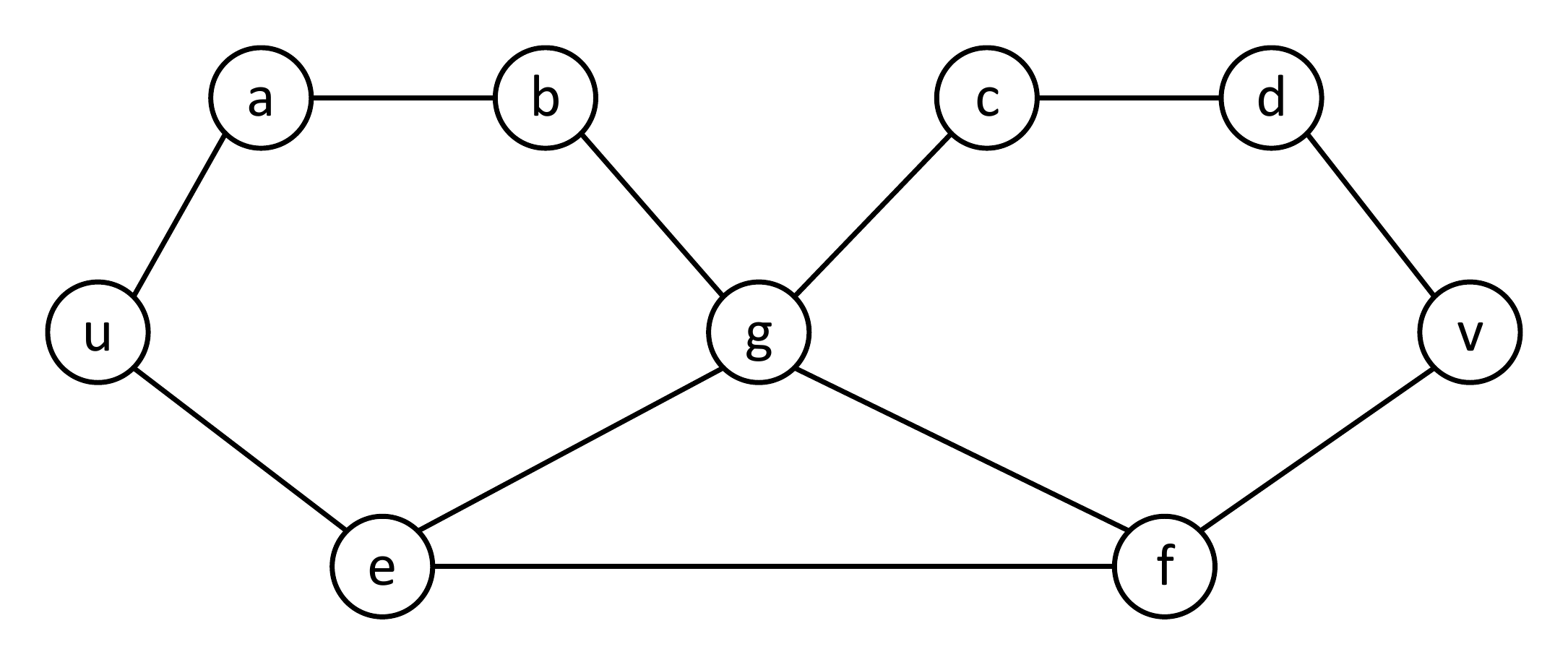}
	\caption{\textbf{Graph example to compare minimum vertex cut and maximum disjoint paths.}}
	\label{fig:minlessmaxx}
\end{figure}

\subsection*{Improvements}
\subsubsection*{Practical Reliable Broadcast Protocol (BFT-BRB)}
We redefine a protocol for the Byzantine Reliable Broadcast \added{with honest dealer}. This protocol employs the same message format and verification algorithm of \emph{MTD-BRB}, namely the label of the processes traversed by a message are collected in \emph{pathsets} and the contents are verified through the VC methodology. We introduce four \replaced{modifications}{optimizations} in the propagation algorithm and one in the verification algorithm that aim to reduce the total number of messages exchanged and we prove their correctness, namely that their employment do not prevent the original algorithms of Dolev and Maurer et al. from enforcing safety and liveness of Byzantine Reliable Broadcast \added{with honest dealer} when the strict enabling condition is met (Remark \ref{rem:cnes}), because they prevent from forwarding messages that are not useful for the delivery of a content.

\begin{optimization}
	\label{optcpa}
	If a process $p$ receives a content directly from the source $s$ (\emph{i.e.} the source and the sender coincides), then it is directly delivered by $p$.
\end{optimization}

\begin{optimization}
	\label{optempty}
	If a process $p$ has delivered a $content$, then it can discard all the related $pathsets$ and relay the $content$ only with an empty pathset to all of its neighbors.
\end{optimization}
\begin{optimization}
	\label{optneighdel}
	A process $p$ relays pathsets related to a content only to the neighbors that have not yet delivered it.
\end{optimization}
\begin{optimization}
	\label{optdiscard}
	If a process $p$ receives \replaced{a content with an empty pathset}{an empty pathset related to a content} from a neighbor $q$, then $p$ can discard from relaying and analyzing any further pathset related to the content that contains the label of $q$.
\end{optimization}
\begin{optimization}
	\label{optstop}
	A process $p$ stops relaying further pathsets related to a content after \replaced{it}{that the it} has been delivered and the empty pathset has been forwarded.
\end{optimization}

The modification \ref{optcpa} \replaced{follows from the defitions of disjoint paths and vertex cut, indeed a path of lenght two is disjoint with respect every other one with the same ends, and the vertex cut is defined between not adjacent nodes, thus there is no vertex cut between neighbors.}{potentially reduces broadcast latency by one round. Indeed, only the neighbors of the source are affected by it}. 

The purpose of modifications \ref{optempty}, \ref{optneighdel}, and \ref{optdiscard} is to reduce the amount of messages exchanged by the protocol \added{and to be analyzed by processes}. The modification \ref{optempty} also provides a transparent way to get the neighbors $q$ of a process $p$ know that a specific content has been delivered by $p$. This one has already been employed~\cite{nesterenko2009discovering} for the purpose of topology reconstruction.

The modification \ref{optstop} introduces a halting condition in the protocol with respect the state of art, indeed all correct processes stop from relaying further messages at the round subsequent the last delivery of a process. 
Furthermore, this modifications makes the original solutions quiescent without assuming that processes know the size of system.

\added{Let us consider the network topology depicted in Figure \ref{fig:cube}a as an example to detail the advantages introduced by the presented modifications. Let us select node $u$ as source process and let us consider the all the paths of lenght two starting from $u$, namely $(u,c,f), (u,c,e), (u,b,f), (u,b,d), (u,a,d), (u,a,e)$. 

Processes $d$, $e$ and $f$, following modification 1 will relay only an empty path instead of extending the paths they received, namely avoiding to generate $(u,c,f,v)$, $(u,c,e,v)$, $(u,b,f,v)$, $(u,b,d,v)$, $(u,a,d,v)$, $(u,a,e,v)$.

Processes $d$, $e$ and $f$, leveraging modification 1, know that the nodes $a,b$ and $c$ have already delivered the content associated to the paths. Applying modification 2, processes $d$, $e$ and $f$ do not relay further paths to $a$, $b$ and $c$, namely they do not generate paths $(u,c,f,b)$, $(u,c,e,a)$, $(u,b,f,c)$, $(u,b,d,a)$, $(u,a,d,b)$, $(u,a,e,c)$.

The modification 3 applies in cases a process $p$ receives paths in round $r_i$ but it delivers the associated content in a round $r_j > r_i$. A neighbor $q$ of $p$ that has not yet delivered the content will get the extension of paths received by $p$ in $r_i$ and potentially the empty path in $r_j + 1$. The modification $3$ enables $q$ to discard from the analysis in delivering the associated content all paths previously received from $p$ and to consider only the empty pathset.}

The pseudo code of our protocol is presented in Figure \ref{fig:algo}. For the ease of explanation and notation, we show the procedure and variables only related to the broadcast of a single $content$ spread by $s$. 

Initially, every process is not aware about the nodes in its neighborhood but it can easily retrieve them with authenticated channels.
For every not delivered content, a process stores \emph{(i)} the received pathsets related to the content (\emph{Pathsets} variable), \emph{(ii)} the pathsets not yet relayed (\emph{To\_Forward} variable) and \emph{(iii)} the labels of neighbors that have delivered the content (\emph{Neigh\_Del} variable).  

Every process starts the round with the send phase, namely selecting the messages to forward and transmitting them. In particular, it extracts part or all of the message related to a content to relay (\emph{select} function), and it forwards them to all of its neighbors that have not yet delivered the content, thus applying modification \ref{optneighdel} in line $8$.

During the receive phase, for every received message related to a content not yet delivered, the label of the sender is attached to the received \emph{pathset} and the resulting collection is stored in order to be considered for the delivery and to be forwarded (we assume an implicit mechanism avoiding duplicate \emph{pathsets}). The modification \ref{optempty} enables a process $p$ to know that a sender $q$ has delivered the content (line $15$). Then, the modification \ref{optneighdel} allows $p$ to discard part of the $pathset$ previously received (lines $17-22$) and that may arrive (line $13$). 

Finally, in the computation phase, all received \emph{pathsets} related to the content are analyzed. Specifically, in case a process has received the content directly from the source $s$ (\emph{i.e.} the sender and the source coincides, the receiver has received the $pathset$ $\{s\}$, line $26$, modification \ref{optcpa}) or the minimum vertex cut computed on \emph{Pathsets} is greater than $f$, it delivers the content, it discards all the \emph{pathsets} not yet forwarded and it enqueues to relay the empty \emph{pathset}, namely applying modification \ref{optempty} in lines $30,31$.

The implementation of modification \ref{optstop} can be \replaced{found}{find} in line $12$. Indeed, once that a process has delivered the content, it discards all the residual $pathsets$ to forward (line $30$). In the receive phase all the messages related to the content already delivered are discarded (line $12$) due to modification \ref{optdiscard}, thus the select function in line $6$ only extracts the empty pathset in the round subsequent the delivery.

\begin{figure}
	\centering
	\includegraphics[width=\textwidth]{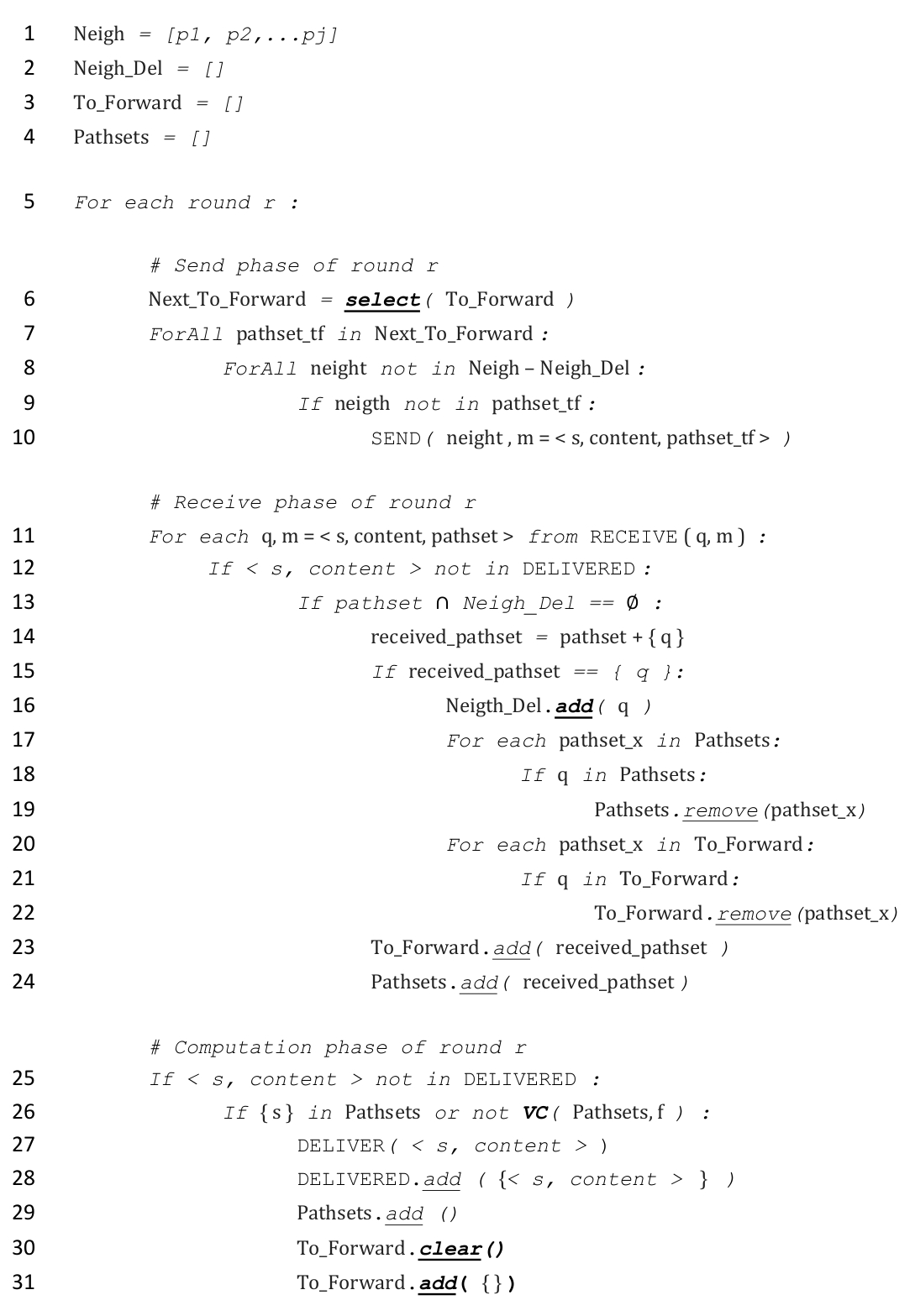}
	\caption{\textbf{Byzantine Tolerant Reliable Broadcast \added{with honest dealer} Protocol.}}
	\label{fig:algo}
\end{figure}

We prove the correctness of the proposed modifications thought the following theorems (assuming the system model we presented and under the assumption of the strict condition in Remark \ref{rem:cnes}).
\begin{theorem}
	\label{th:sourcedel}
	Let $p$ be a process executing either the Dolev or the Maurer \emph{et al.} algorithm to broadcast a content. If $p$ delivers a content received directly from the source, then the safety property continues to be satisfied (\emph{i.e.} employing modification \ref{optcpa}).
\end{theorem}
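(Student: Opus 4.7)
The plan is to reduce the safety of Modification~\ref{optcpa} to two facts already built into the system model: the authentication property of the communication channels and the honest dealer assumption. The goal is to show that whenever $p$ triggers the shortcut in Modification~\ref{optcpa} and delivers a content $c$, the content $c$ was in fact previously broadcast by the source $s$, which is precisely the safety clause in the problem statement.

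First, I would unfold the scenario of Modification~\ref{optcpa}: $p$ delivers $c$ upon receiving a message $m=\langle s,c,\emptyset\rangle$ (or equivalently, a message whose attached pathset corresponds to $p$'s own channel to $s$) such that the sender coincides with the declared source~$s$. Because $p$ and $s$ are neighbors, the message arrives through the direct channel $\{p,s\}\in E$. Next I would invoke the authentication guarantee stated in the System Model: if a correct process $p$ receives a message $m$ whose sender field is $s$, then $m$ was actually transmitted to $p$ by $s$ over that channel. Hence the observed event at $p$ cannot have been produced by any Byzantine process impersonating $s$; the transmitter is really $s$ itself.

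Now I would apply the honest dealer assumption, which is stipulated throughout the paper (BRB \emph{with honest dealer}). Under this hypothesis $s$ is correct, so it follows the propagation algorithm faithfully: it only ever sends a message of the form $\langle s,c,\emptyset\rangle$ if it has indeed broadcast $c$. Combining with the previous step, the content $c$ received by $p$ directly from $s$ coincides with a content genuinely broadcast by $s$. Therefore the delivery performed by $p$ via Modification~\ref{optcpa} satisfies the safety clause: every delivered content has been previously sent by the source.

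I do not expect a real obstacle here, since the argument is essentially a one-step chain (authenticated channel $+$ honest source $\Rightarrow$ safety). The only subtle point worth making explicit is that the original $f+1$-disjoint-paths (resp.\ minimum vertex cut $\geq f+1$) verification in Dolev and Maurer \emph{et al.} is needed precisely to rule out Byzantine impersonation along indirect paths; when the path has length one and the endpoint is the honest dealer, authentication already provides that guarantee, so the extra verification is redundant and the modification is sound.
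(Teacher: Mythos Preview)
Your proposal is correct and follows essentially the same approach as the paper: the paper's proof simply invokes the authentication property of the channels, noting that this guarantee---a received message was previously sent by the sender---already coincides with the safety clause when the sender is the source. Your additional invocation of the honest dealer assumption is harmless but not strictly necessary here, since safety only requires that the delivered content was \emph{sent} by the source, which authentication alone establishes regardless of whether $s$ is correct.
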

\begin{proof}
	It follows directly from the property of the channels (reliable and authenticated), indeed the channels guarantee that every received message has been previously sent by the sender, that coincides with the safety property of reliable broadcast.
\end{proof}

\begin{theorem}
	\label{th:safe_retrasm}
	Let $p$ be a process executing either the Dolev or the Maurer \emph{et al.} algorithm to broadcast a content. If $p$ delivered a content, then $p$ can relay that content with an empty path/pathset and the safety property continues to be satisfied (\emph{i.e.} employing modification \ref{optempty}).	
\end{theorem}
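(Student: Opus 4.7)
My plan is to argue by induction on the round number, or equivalently by a first-violation (minimal counterexample) argument. Assume for contradiction that when modification~\ref{optempty} is applied on top of Dolev's or Maurer et al.'s algorithm, some correct process eventually delivers a content $\tilde{c}$ that was never sent by the source $s$. Let $t^\star$ be the earliest round at which any such spurious delivery occurs at any correct process, and let $r$ be such a process.

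The crucial step is to observe that, by the triggering condition of modification~\ref{optempty}, a correct process only forwards an empty pathset for a content \emph{after} it has itself delivered that content. By the minimality of $t^\star$, no correct process has delivered $\tilde{c}$ strictly before round $t^\star$. Hence no correct process has propagated an empty pathset tagged with $\tilde{c}$ prior to $t^\star$. Every pathset for $\tilde{c}$ stored by $r$ at round $t^\star$ therefore falls into one of two categories: either it was generated by a correct relay following exactly the unmodified Dolev / Maurer et al. propagation rule on content $\tilde{c}$, or it was crafted arbitrarily by a Byzantine neighbor. The latter category is already available to the adversary in the unmodified protocol, since a Byzantine neighbor can send any pathset it pleases. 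Consequently, the family of pathsets for $\tilde{c}$ that $r$ holds at round $t^\star$ can also arise in an execution of the unmodified algorithm against a suitable adversarial schedule.

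The proof is then completed by invoking the safety of the underlying unmodified algorithm under the connectivity condition of Remark~\ref{rem:cnes}: with at most $f$ Byzantine processes in a $(2f+1)$-connected network, a correct process cannot accumulate, for a content never sent by the source, a family of pathsets whose minimum vertex cut exceeds $f$ (equivalently, that contains $f+1$ pairwise disjoint paths). This contradicts the assumption that $r$ delivers $\tilde{c}$ at round $t^\star$, so modification~\ref{optempty} preserves safety. The main subtlety I expect is the coupling between correct processes' deliveries, since a correct $p$'s empty pathset may itself enable another correct process to deliver and to emit further empty pathsets, and so on. Picking the earliest violating round $t^\star$ cleanly breaks this recursion, because at that round no empty pathset for the offending content has yet been emitted by any correct process, reducing the analysis to the base algorithm whose safety is already established.
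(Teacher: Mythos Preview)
Your argument is correct and considerably more careful than the paper's own proof. The paper dispatches the theorem in two sentences: once $p$ has delivered, the content has already been ``verified as safe by $p$'', so the traversal information carried in the path/pathset is no longer needed and can be dropped. That is essentially an appeal to the intuition that a delivered content is genuine, and hence relaying it cannot endanger safety downstream.

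What you add is the explicit handling of the circularity that this intuition glosses over: to conclude that $p$'s delivery was itself safe, one already needs safety of the \emph{modified} protocol up to that point, since other correct processes might have emitted empty pathsets earlier. Your first-violation argument at the minimal round $t^\star$ is exactly the device that breaks this recursion, reducing the situation at $t^\star$ to one achievable under the unmodified algorithm (Byzantine neighbors can send empty pathsets in either protocol, and correct processes have not yet triggered modification~\ref{optempty} for the offending content). The paper's proof and yours rest on the same underlying idea---reduction to the base protocol's safety---but you make the inductive structure explicit, which is what a fully rigorous proof requires here.
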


\begin{proof}
	The aim of the information about the nodes traversed by a content is to enable a process $p$ to decide whether it can be safely accepted. Once it has been delivered, the information about the nodes traversed before reaching $p$ is not useful, because the content has been already verified as safe by $p$.
\end{proof}

\begin{theorem}
	\label{th:safe_ignore}
	Let $p$ be a process executing either the Dolev or the Maurer \emph{et al.} algorithm to broadcast a content and let us assume that the modification \ref{optempty} is employed. Even if $p$ does not relay messages carrying the content to its neighbors that already delivered it, the liveness property continues to be satisfied (\emph{i.e.} employing modification \ref{optneighdel}).
\end{theorem}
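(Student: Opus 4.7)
The plan is to show that modification~\ref{optneighdel} does not prevent any correct process from eventually delivering the content, under the standing assumption that modification~\ref{optempty} is also employed. Liveness of the unmodified protocols (\emph{D-BRB} and \emph{MTD-BRB}) is already guaranteed by the connectivity condition of Remark~\ref{rem:cnes}, so it suffices to prove that the messages suppressed by modification~\ref{optneighdel} are irrelevant for the delivery of every correct process.

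First I would pinpoint exactly what modification~\ref{optneighdel} withholds: by construction, $p$ only refrains from sending to those neighbors $q$ that have already delivered the content. Such a $q$ trivially needs no further message to deliver, since it has already done so. The only remaining concern is whether the withheld messages would have contributed, through $q$'s subsequent forwarding behavior, to the delivery of some third correct process $r$.

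Next I would argue that this downstream contribution is empty by invoking modification~\ref{optempty}. Once $q$ has delivered, Theorem~\ref{th:safe_retrasm} permits $q$ to discard every incoming pathset related to the content and to relay only the empty pathset to its neighbors. Hence, whether or not $p$ actually forwards a non-empty pathset to $q$, the outgoing traffic emitted by $q$ for this content is identical, namely the single empty pathset already sent to $q$'s still-undelivered neighbors. Consequently, for every correct process $r \neq q$, the collection of pathsets arriving at $r$ in any round is the same with or without modification~\ref{optneighdel} in place.

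Finally, since the sequence of messages received by each undelivered correct process $r$ is unchanged, the verification algorithm at $r$ behaves exactly as in the protocol with only modification~\ref{optempty} active, and so $r$ eventually accumulates either an empty pathset from the source (triggering modification~\ref{optcpa}) or a set of pathsets whose minimum vertex cut reaches $f+1$, at which point $r$ delivers. I expect the main technical step to be a careful round-by-round induction establishing that the outgoing traffic of every already-delivered neighbor is unchanged by the suppression; this reduces to a pointwise appeal to modification~\ref{optempty}, and is the only place where that hypothesis is genuinely needed in the argument.
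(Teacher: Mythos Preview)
Your proposal is correct and follows essentially the same line as the paper's proof: both arguments hinge on the observation that, once modification~\ref{optempty} is in force, a delivered neighbor $q$ emits only the empty pathset, so suppressing $p\to q$ traffic cannot alter what any third process $r$ eventually receives or verifies. The paper phrases this slightly differently (noting that any path/pathset through $p$ and $q$ reaching $r$ would be dominated by the singleton $\{q\}$ in the DP/VC computation), but the core idea and dependence on Theorem~\ref{th:safe_retrasm} are the same.
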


\begin{proof}
	Let us assume that there exists three processes $p,q,r$ such that only $q$ has already delivered the content and that, among others, the following communication channels are available: $(p,q)$, $(q,r)$.
	From Theorem \ref{th:safe_retrasm} we know that process $q$ can safely relay the content with an empty path/pathset (\emph{i.e.} employing modification \ref{optempty}). Thus, any further path/pathset containing $p$ and $q$, after the delivery of $q$, does not affect the results of DP and VC verifying the content on $r$. It follows that any further transmission related to the content from $p$ to $q$ can be avoided after that $q$ has delivered without compromising liveness.
\end{proof}

\begin{theorem}
	Let $p$ be a process executing either the Dolev or the Maurer \emph{et al.} algorithm to broadcast a content and let us assume that the modification \ref{optempty} is employed. If process $p$ receives an empty path/pathset relative to a content from a neighbor $q$, than $p$ can discard from its analysis and from relaying further path/pathset containing the label of $q$ and the liveness property continues to be satisfied (\emph{i.e.} employing modification \ref{optdiscard}).
\end{theorem}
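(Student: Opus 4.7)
The plan is to lift the argument used for Theorem~\ref{th:safe_ignore} to the receiver side. The starting observation is that, under modification~\ref{optempty}, an empty pathset is generated only after the sender has delivered; hence receiving one from $q$ means that $q$ has delivered the content and, in effect, endows $p$ with the pathset $\{q\}$ among those it collects.

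The core step is a vertex-cut lemma: since $\{q\}$ belongs to $p$'s collection, every hitting set $C$ realizing the minimum vertex cut must contain $q$, and then $C$ automatically hits every other pathset that contains $q$. Such pathsets therefore impose no additional constraint on $C$, so the minimum vertex cut computed on the full collection coincides with the one computed after discarding every pathset containing $q$. In particular, VC produces the same verdict at $p$, which preserves $p$'s own liveness.

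For the relaying side, I would argue that withholding a message $P \cup \{p\}$ with $q \in P$ from a neighbor $r$ cannot block $r$'s delivery. If $r$ is also a neighbor of $q$, then $r$ receives the empty pathset from $q$ directly (modification~\ref{optempty}) and the same vertex-cut lemma applies verbatim at $r$, so the omitted message is redundant. If $r$ is not a neighbor of $q$, the empty-pathset cascade triggered by $q$, combined with a recursive application of modifications~\ref{optempty} and~\ref{optdiscard} along the routes reaching $r$, still furnishes $r$ with a pathset whose node set forces $q$ into any minimum hitting set used at $r$; again the omitted message adds nothing.

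The hardest part will be making this last step rigorous, and the natural tool is induction on the round at which the empty-pathset information originating at $q$ reaches a given correct process. At each inductive level one must show that the pathsets containing $q$ are dominated, in the hitting-set sense, by the information obtained via the empty-pathset cascade, so that they can be safely omitted without pushing the minimum vertex cut below $f+1$ at any correct recipient. Once this is established, safety is immediate (no new pathsets are created, only discarded) and the liveness claim follows.
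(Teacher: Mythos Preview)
Your vertex-cut lemma for $p$'s own analysis is correct and is exactly the paper's argument: once the singleton $\{q\}$ is in $p$'s collection, every hitting set must contain $q$, so every other pathset containing $q$ is automatically hit and can be dropped without changing the minimum cut. (The analogous statement for DP, which you do not spell out but the theorem also covers, is that $\{q\}$ can always be taken into a maximum family of disjoint paths, rendering any other path through $q$ unusable.)

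The relaying part is where you take a detour, and the detour has a gap. Your case split on whether $r$ is a neighbour of $q$ is unnecessary, and the second branch---the ``empty-pathset cascade'' plus an induction on rounds---does not deliver what you claim. Nothing forces $r$ to acquire a pathset that pins $q$ specifically into every hitting set: if $r$ is not adjacent to $q$, $r$ never obtains the singleton $\{q\}$, and the cascade you invoke only fires when intermediate processes themselves deliver, which need not happen before $r$'s own verification. The paper's argument is much more local and sidesteps all of this. The key observation you are missing is that $p$ does \emph{not} discard the singleton $\{q\}$ itself (only pathsets containing $q$ beyond $\{q\}$ are discarded), so $p$ still relays $\{q\}$ to every neighbour $r$, who therefore stores $\{q,p\}$. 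Now any withheld message has the form $P\cup\{p\}$ with $q\in P$, hence $\{q,p\}\subseteq P\cup\{p\}$; the pathset $\{q,p\}$ already present at $r$ dominates it for both VC and DP, so nothing is lost. No induction, no cascade, no adjacency case analysis---the domination is carried one hop by the very message $p$ itself sends.
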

\begin{proof}
	Let us assume that there exists three processes $p,q,r$ such that only $p$ has already delivered a content and that, among others, the following communication channels are available: $(p,q)$, $(p,r)$. We have to prove that process $p$ can discard, verifying the associated content, further path/pathset containing the label of $q$ but $\{q\}$ without affecting the liveness property. This follows from the fact that path/pathset of unit lenght are included in every solution of the VC and DP and that any path/pathset containing more labels does not increase the value computed by VC and DP.
	We have to prove that this reasoning extends also for process $r$, so that process $p$ can avoid relaying further path/pathset over $\{q\}$. On process $r$, any path/pathset that extends $\{q,p\}$ does not increase the value obtained by VC and DP. It follows that any other path/pathset over $\{q\}$ has not to be relayed.	
\end{proof}

\begin{theorem}
	\label{th:halting}
	Let $p$ be a process executing either Dolev or Maurer \emph{et al.} algorithm to broadcast a content and let us assume that the modification \ref{optempty} is employed. If $p$ has delivered and relayed the content with an empty path/pathset to all of its neighbors, then $p$ can stop from relaying further related paths/pathsets and the liveness property continues to be satisfied (\emph{i.e.} employing modification \ref{optstop}).
\end{theorem}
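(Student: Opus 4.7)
The plan is to reduce the statement to a consequence of the previously-established modifications, especially Modifications~\ref{optempty}, \ref{optneighdel}, and \ref{optdiscard}, rather than to redo the vertex-cut / disjoint-paths combinatorics from scratch. Concretely, I would fix an arbitrary correct neighbor $q$ of $p$ and show that any pathset $p$ would still be tempted to forward after relaying the empty pathset is irrelevant to $q$'s verification procedure; since liveness is a property quantified over every correct process, this pointwise statement is enough.

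First I would split the neighbors of $p$ into two groups at the moment $p$ has delivered and forwarded the empty pathset: those that have already delivered the content and those that have not. For the first group, Theorem~\ref{th:safe_ignore} (Modification~\ref{optneighdel}) already licenses $p$ to withhold further messages without breaking liveness, so nothing remains to prove. For the second group, I would invoke Theorem~\ref{th:safe_retrasm} (Modification~\ref{optempty}) to observe that the single empty pathset $\{p\}$ that $q$ has just received from $p$ is, from $q$'s standpoint, as informative as any longer pathset that $p$ could subsequently relay: every such longer pathset, after $q$ appends the sender label, would be a superset of $\{p\}$ and would therefore, by the theorem proving Modification~\ref{optdiscard}, be discarded by $q$ from both its analysis and its relaying.

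Hence any message $p$ would still forward after the empty-pathset round contributes nothing to the minimum vertex cut (or, equivalently, to the maximum disjoint-paths count) computed by any neighbor, and therefore nothing to the verification outcome at any other correct process reachable through $p$. I would close by noting that liveness for a correct source follows from Remark~\ref{rem:cnes}: the $2f{+}1$ connectivity guarantees that every correct process $r$ has, in the subgraph of correct nodes, enough disjoint paths from $s$ to accumulate a pathset collection with minimum vertex cut at least $f+1$, and each such path propagates the content in a bounded number of rounds independently of whether $p$ keeps relaying after its empty-pathset message.

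The main obstacle I anticipate is timing: one must make sure that $p$ does not stop \emph{before} the empty pathset has actually reached those neighbors that had not yet delivered, otherwise the premise of the Modification~\ref{optdiscard} theorem would not apply at $q$. I would handle this by being explicit that the halting condition in the statement is precisely ``after the empty pathset has been forwarded,'' which, under the synchronous round model, means that in the receive phase of the next round every still-undelivering neighbor $q$ holds the empty pathset from $p$; only from that round on does $p$ refrain from relaying, and from that round on $q$'s discard rule is active against any would-be forward from $p$.
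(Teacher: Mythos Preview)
Your proposal is correct and rests on the same key observation as the paper: once every neighbor $q$ of $p$ holds the pathset $\{p\}$, any further pathset $p$ could relay would reach $q$ as a strict superset $\{\dots,p\}$ and therefore cannot increase the minimum vertex cut (or maximum disjoint paths) at $q$ or at any process downstream. The paper states this in a single sentence directly about the VC/DP computation, whereas you route the same point through the correctness theorems for Modifications~\ref{optneighdel} and~\ref{optdiscard} and add a case split on whether $q$ has already delivered plus an explicit timing argument; this extra scaffolding is sound but not needed, and your appeal to Theorem~\ref{th:safe_retrasm} is slightly misplaced (that theorem concerns safety of the empty-pathset relay, not its informational sufficiency---the latter is exactly what the Modification~\ref{optdiscard} theorem gives you).
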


\begin{proof}
	It follows from the fact that any further path/pathset related to the content received and relayed by $p$ does not increase the minimum cut/the maximum disjoint paths computed on other processes with respect the empty path/pathset relayed by $p$. Said differently, all the neighbors of $p$ receive the paths/pathset $\{p\}$ and any further path/pathset relayed by $p$ becomes $\{\dots, p\}$, increasing neither the minimum vertex cut nor the maximum disjoint paths.
\end{proof}

\subsubsection*{Preventing Flooding and Forwarding Policies.}
We highlighted the fact that the verification algorithm has potentially to analyze a factorial, in the size of the network, amount of pathsets even only considering all processes to be correct. Nevertheless, a Byzantine process $b$ can potentially flood the network with spurious messages (\emph{i.e.}, $\tilde{m} := \langle \tilde{s}, \tilde{content}, \tilde{pathset}\rangle$ where $\tilde{s}$, $\tilde{content}$ and $\tilde{pathset}$ can be invented by the faulty process) that are also diffused by the correct ones. 
Considering that the amount of messages plays a crucial impact on the employment of the protocol we defined, a countermeasure must be researched. 

A common way to limit the flooding capability of Byzantine processes is to constraint the channel capacity of every process, namely limiting the amount of messages that every process is allowed to send in a time window.

Noticed that, by introducing such a constraint, we are limiting the relaying capability of every process, while the Byzantine processes can continuously generate spurious messages potentially preventing the liveness property to be satisfied. It follows that a selection policy among all the messages to relay is demanded.

Every process has to relay \emph{pathsets} to all of its neighbors that have not yet delivered the content.
A \emph{pathset} that may lead a neighbor $q$ to the delivery of the associated content has not to contain the label of $q$ (because it \replaced{would be}{is} directly discarded), namely a process $p$ has to select among the \emph{pathsets} to forward the ones that do not include the label of $q$. 
There may be many pathsets that do not include $q$. Thus, we consider and evaluate \replaced{two}{three} selection policies: \emph{(i)} \emph{Multi-Random} and \emph{(ii)} \emph{Multi-Shortest} \deleted{and \emph{(iii)} \emph{Shortest-One-For-Every}}. 
The Multi-Random is an extension of the forwarding policy proposed in \cite{DBLP:conf/ladc/bonomi18}.

The algorithms for the pathsets selection implementing the Multi-Random and Multi-Shortest policies are presented in Figure \ref{fig:multipol}. The selection iteratively picks one \emph{pathset} and checks if it is ``useful'' for any neighbor (\emph{i.e.} if any neighbor to contact is not included in the \emph{pathset}). This selection continues till \emph{(i)} all the neighbors to contact receives at least one \emph{pathset} where they are not included or \emph{(ii)} the bound on channel capacity has been reached.
The Multi-Random policy iteratively picks randomly a possible \emph{pathset} to forward, the Multi-Shortest gives priority to the shorter ones.

\deleted{The algorithm implementing the Shortest-One-For-Every policy is presented in Figure \ref{fig:ofepol}. The selection iteratively picks one \emph{pathset} for every neighbor that has not yet delivered the content, giving priority to the shorter pathsets.} 

We compare and analyzed them both in the following.

\begin{figure}
	\centering
	\includegraphics[width=.85\textwidth]{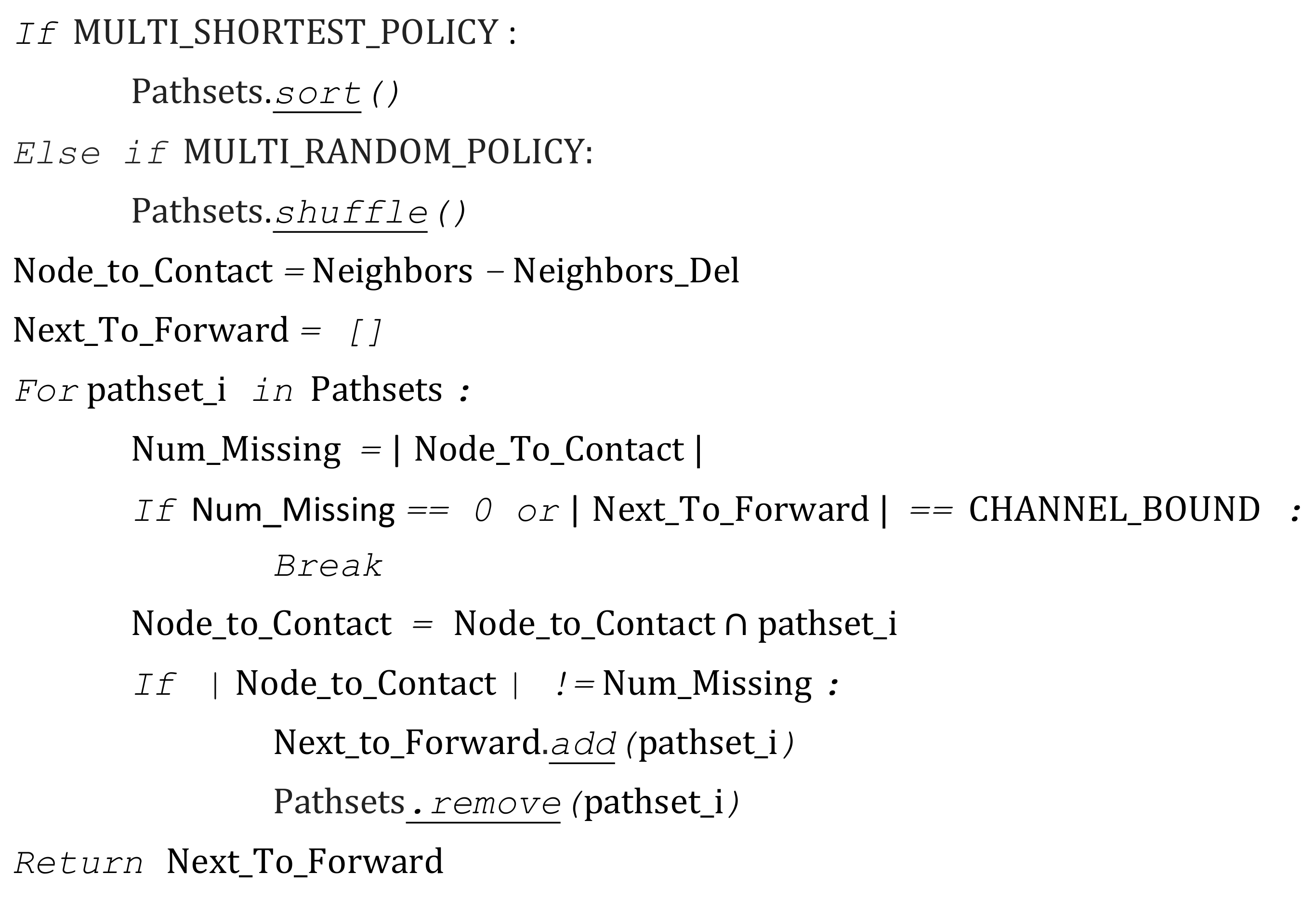}
	\caption{\textbf{Multi-Shortest and Multi-Random Policies.}}
	\label{fig:multipol}
\end{figure}

\section*{Practical Reliable Broadcast Evaluation}
We simulate the protocol and the policies we proposed in order to evaluate their effectiveness and to compare our protocol with the state-of-art solutions.

According to the system model we defined, we simulate single broadcasts that evolves in rounds. Therefore, the passage of time is measured in number of rounds. 

We made use of the implementation provided by Gainer-Dewar and Vera-Licona \cite{gainer2017minimal} for the algorithm defined by Murakami and Uno \cite{murakami2014efficient} to solve the VC reduction to the hitting set problem.

We consider the following parameters in our simulation:
\begin{itemize}
	\item $n$, \emph{i.e.} the size of the network considered;
	\item $k$, \emph{i.e.} the vertex connectivity of the network considered;
	\item \emph{topology}, \emph{i.e.} the topology of network considered;
	\item \emph{channel capacity}, \emph{i.e.} the maximum number of messages that a process can send in a link per round;
	\item \emph{kind of failure}, \emph{i.e.} how faulty process behave;
	\item \emph{forwarding policy}, \emph{i.e.} one among \emph{Multi-Shortest} and \emph{Multi-Random} \deleted{and \emph{One-For-Every} presented above}.	
\end{itemize}
In order to carry an analysis as complete as possible, we consider the following network topologies:
\begin{itemize}
	\item $k$-regular $k$-connected random graph \cite{DBLP:journals/cpc/StegerW99};
	\item k-pasted-tree \cite{DBLP:journals/tcs/BaldoniBQP09};
	\item k-diamond \cite{DBLP:journals/tcs/BaldoniBQP09};
	\item multipartite wheel;
	\item generalized wheel \cite{Xu:2010:TSA:1965102};
	\item Barab\'asi-Albert graph \cite{barabasi1999emergence}.
\end{itemize} 

\begin{figure}[!htb]
	\centering
	\includegraphics[width=\textwidth]{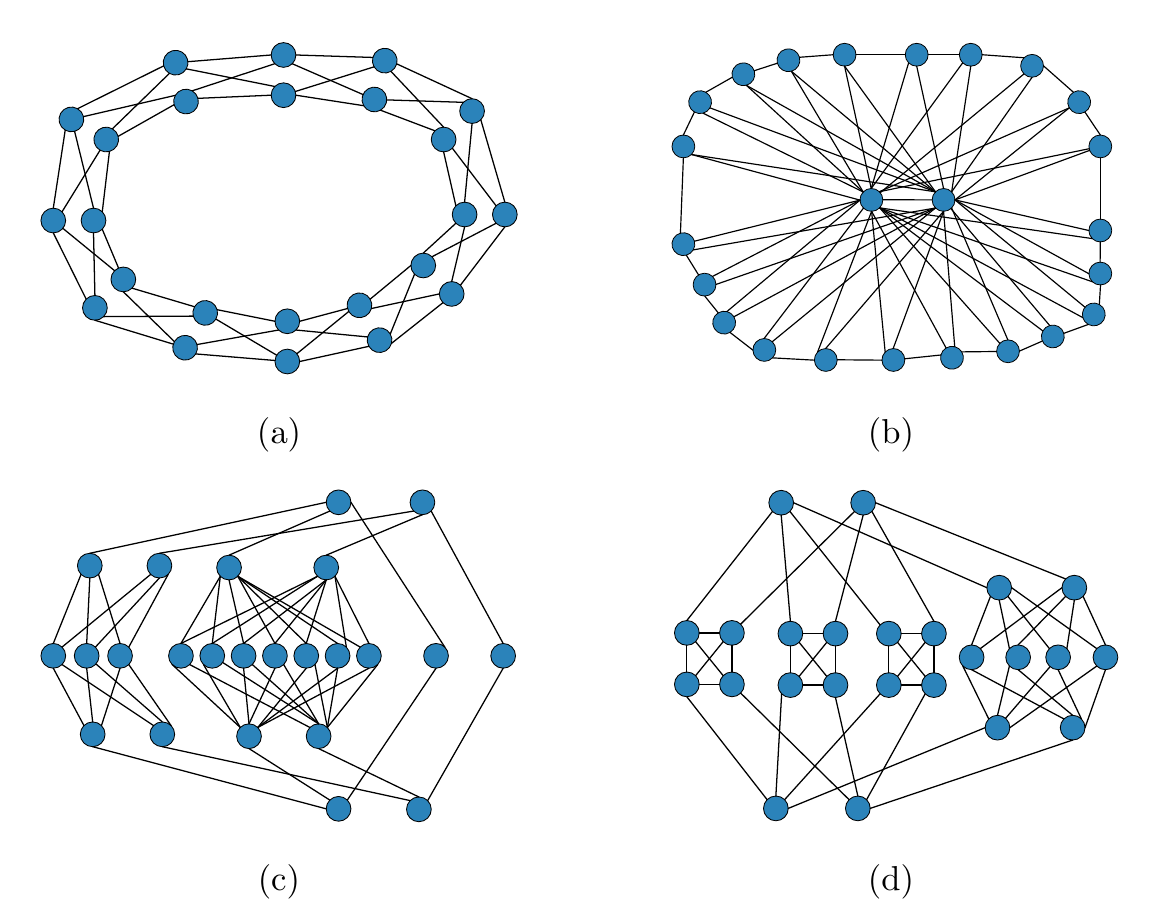}
	\caption{\textbf{Graph topologies}. $n=24$, $k=4$. (a) multipartite wheel, (b) generalized wheel, (c) k-pasted-tree, (d) k-diamond.}
	\label{fig:alltopologies}
\end{figure}

A graph is regular if every node is connected to the same number of neighbors, namely in a $k$-regular graph every node is connected exactly to $k$ neighbors.
The $k$-regular $k$-connected graphs have vertex connectivity equals to $k$ with the minimum necessary number of edges.
The $k$-regular $k$-connected random graphs are the ones uniformly sampled among all possible regular graphs employing the sampling methodology defined in \cite{DBLP:journals/cpc/StegerW99}. 

The k-pasted-trees and k-diamond graphs are \emph{Logarithmic Harary Graph} \cite{DBLP:conf/icdcsw/JenkinsD01}, namely topologies designed to be robust to failures and suited for distributed systems where the information spreading occurs by message flooding. Indeed, they are $k$-connected graphs with a logarithmic diameter and with minimal edges guaranteeing the node-connectivity (\emph{i.e.} the removal of an edge decreases vertex connectivity of the network). For specific values of network size $n$ and vertex connectivity $k$ they are $k$-regular. A graphical example of k-pasted-trees and k-diamond are respectively presented in Figures \ref{fig:alltopologies}c and \ref{fig:alltopologies}d. 

We refer with \emph{multipartite wheel} to a regular graph composed by the concatenation of disjoint \replaced{groups}{sets} of $k/2$ nodes such that every node in a \replaced{group}{set} is connected to exactly all the $k/2$ nodes in other 2 \replaced{groups}{sets} and no node inside a \replaced{group}{set} is connected with others of the same \replaced{group}{set}. A graphical example is provided in Figure \ref{fig:alltopologies}a. 

Notice that $k$-regular $k$-connected graphs can be constructed in several ways, indeed we are considering four different constructions that are either always regular or regular for specific settings. The sequel demonstrates that the specific construction impacts protocol performance. 

We considered also the Barab\'asi-Albert graphs that models complex and social networks with \deleted{vertex connectivity following} scale-free power law \added{degree distribution}. The aim is to evaluate our protocol also on topologies not designed for distributed systems.

Finally, we consider the generalized wheel, \emph{i.e.} the topology generated by the disjoint union between a cycle and a $k-2$ clique. An example can be found in Figure \ref{fig:alltopologies}b. It has been considered as a worst case scenario. 

We \replaced{carry our simulations either considering}{consider} the maximum number of tolerable faulty processes, thus for every $k$-connected network we assume $f = \lfloor(k - 1) / 2\rfloor$ failures (Remark \ref{rem:cnes})\added{, or we tested all possible values for $f$ between 0 and $\lfloor(k - 1) / 2\rfloor$}. \replaced{In any case,}{It follows that} processes deliver a content only when the related pathsets have a minimum vertex cut greater than \replaced{$\lfloor(k - 1) / 2\rfloor$}{$f$}.

We consider two configurations for the channel capacity: \emph{bounded} and \emph{unbounded}. The former constrains processes to send a limited number of messages per link in every round, the latter imposes no restriction. For the bounded case we assume a bound for the channel capacity equal to $f+1$ messages. \deleted{The value of $f+1$ has been chosen to allow at least one node in the neighborhood of a process to receive at least one pathset not generated by a Byzantine neighbor while employing the Shortest-One-For-Every forwarding policy.}

The source code of the presented simulations can be found online \cite{sourcecode}.

\subsection*{Simulating Byzantine Behaviours}
We move from the scenario where all processes are correct, to the case where the $f$ Byzantine processes  act as crash failures, (thus not relaying any message, we refer to them as \emph{passive} Byzantines), till the case they spread spurious messages (we refer to them as \emph{active} Byzantines). Specifically to this last scenario, we have to notice that spurious contents (\emph{i.e.} contents generated by Byzantine processes $b_i \neq s$ sent inside a message with source $s$) are never accepted by correct processes (if the BRB enabling condition in Remark \ref{rem:vcmd} is met) and their spreading and verification is disjoint with respect the content broadcast by the source (because they are related to a different $s$ - $content$). For this reason we impose to Byzantine processes to spread only spurious \emph{pathsets} in our simulations (thus relaying the content broadcast by the source). The purpose is to flood the correct processes with spurious pathsets trying to not facilitate the achievement of the delivery condition. In detail, the Byzantine processes diffuse \emph{pathsets} containing the label of one correct neighbor of the receiver in the first round, and \emph{pathseths} containing one of the correct neighbor of the receiver with a random label in the subsequent rounds. Every Byzantine process sends $f+1$ messages (the maximum amount allowed by the channel capacity) containing different \emph{pathsets} on every of its link per round.

\added{Additionally, we consider two kinds of active Byzantine processes: \emph{omniscient} and \emph{general}. Omniscient active faulty processes know the content that the source is going to spread before receiving it through a message, thus they start flooding correct processes with spourious pathsets from the beginning of broadcast. General active faulty processes, instead, spreads spurious pathsets in the round subsequent the first reception of a message containing the content, namely as soon as they get knowledge about the content through the network.}
\added{Notice that there are other strategies that Byzantine processes may adopt generating spourious pathsets, especially if such Byzantines are omniscient about the state of all other processes. The Byzantine strategy that we adopted has been choosen to allows faulty processes to generate pathsests that may be selected by the correct process due to their lenght}.

In every simulation, the source and the Byzantine processes are randomly placed.

\replaced{For all the results we are going to show we directly plot all the measures we got as points (except for Figures \ref{fig:stateofart},\ref{fig:varyF_cut},\ref{fig:varyF_byz} where the mean of the measures is depicted) in order to show their distributions, and we accordingly increase the size of the points with higher density.}{the confidence interval of the registered measures, assuming a normal distribution and a confidence level of 95\%.}

\subsection*{Comparison with the state of art}
We start comparing the message complexity of the state of art solutions with our protocol. We consider k-regular k-connected random graphs, we assume the vertex connectivity $k$ equal to $3$ and $5$ and we simulate \emph{D-BRB}, \emph{MTD-BRB} and \emph{BFT-MTB} considering unbounded channels and all correct processes, and we vary the size of the network from $n=6$ to $n=20$.
\begin{figure}[!htb]
	\includegraphics[width=\textwidth]{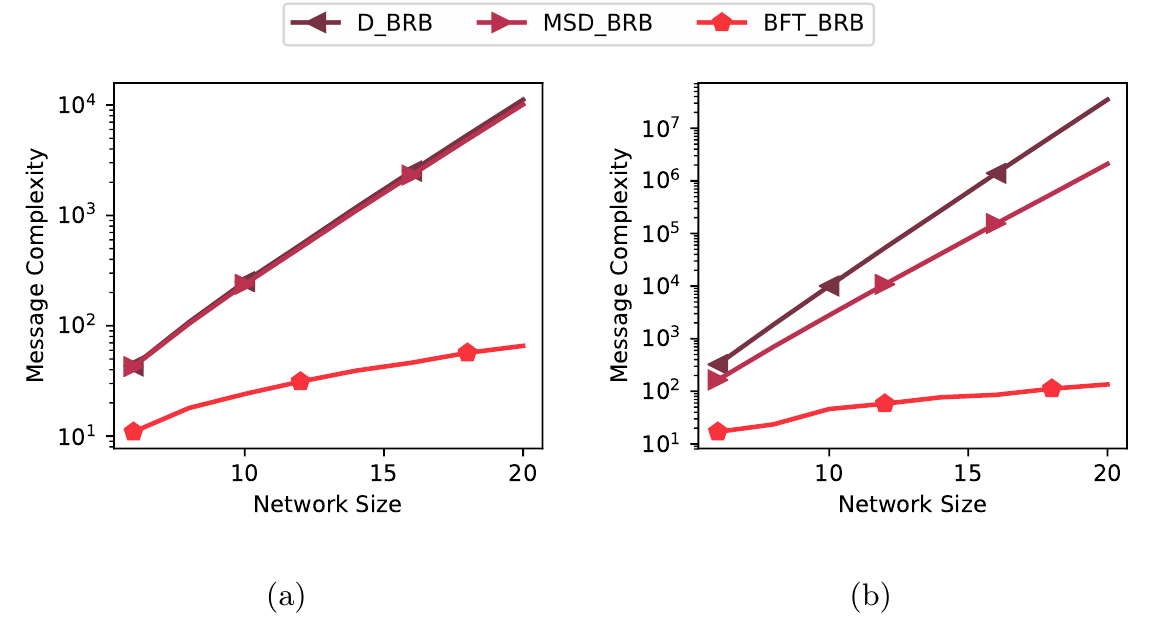}
	\caption{\textbf{State of art VS our protocol, message complexity.} Random regular network, unbounded channels, $f$ =$\lfloor(k -1)/2\rfloor$ all correct. (a) $k=3$, (b) $k=5$.}
	\label{fig:stateofart}
\end{figure}
We previously remarked \replaced{about the}{that} lack in the state of art protocols \replaced{of}{is} a halting condition, indeed they generate all source-to-other paths/pathsets in every execution. 
It can be noticed in Figure \ref{fig:stateofart} that the modifications we defined have a remarkable impact on the message complexity even in a small and all-correct scenario.
It can also be noticed the advantage gained by choosing \emph{pathsets} over \emph{paths}, as expected.

\subsection*{Multi-Random VS Multi-Shortest}
We proposed as a countermeasure against the capability of Byzantine processes to flood the network a constraint on the channel capacity, namely limiting the amount of messages that a process can send over a link per round, and we set this bound equal to $f+1$. Then, we proposed \replaced{two}{three} forwarding policies to select which pathsets relay in the actual round.
Assuming bounded channels, we compare the presented policies, \emph{Multi-Random} and \emph{Multi-Shortest} \deleted{and \emph{One-For-Every}}, considering networks of size $n=100$, topologies random regular, multipartite wheel, k-diamond and k-pasted-tree, passive \deleted{and active} Byzantine failures. The results are presented in Figures \ref{fig:multirandmsg} and \ref{fig:multishort100all} (notice that scale of the graphics in Figure \ref{fig:multirandmsg} are logarithmic). 
Starting \replaced{with}{from} the Multi-Random policy, it can be seen in Figures \ref{fig:multirandmsg} that, while for some graphs the Multi-Random policy acts smoothly, the random regular (confirming the results achieved in our preliminary work~\cite{DBLP:conf/ladc/bonomi18}) the multipartite wheel graphs \added{and the k-diamond}, there exist topologies where the broadcast latency and message complexity may conspicuously increase (k-pasted-tree). \added{It follows that on some kind of graphs the selections of paths that the Multi-Random policy may take are not equivalment with respect the protocol progression and that additional criterion have to be considered  in the selection.} This lead us to discard such a policy to be one generally employable.
\deleted{The results achieved simulating the One-For-Every and Multi-Shortest policies are plotted in Figures \ref{fig:ofemsg},\ref{fig:multishortmsg},\ref{fig:ofelat},\ref{fig:multishortlat}.  
	It can be deduced that the former is more correlated to the specific network topology. Indeed, we obtained distinguishable results while considering different networks (Figure \ref{fig:ofemsg}). Furthermore, the One-For-Every policy appears less robust against Byzantine active processes: it is possible to observe in Figure \ref{fig:ofemsg}b that the message complexity considerably increase when the Byzantine processes spread spurious messages.}
Contrarily, the performance achieved employing the Multi-Shortest policy appears  \replaced{not affected by this misbehaviour}{less correlated to the specific topology and more robust against passive Byzantine processes} (Figures \ref{fig:multishort100all}). 
\added{Therefore,} we further investigate the Multi-Shortest policy while increasing the size of the network. 
\begin{figure}[!htb]
	\centering
	\includegraphics[width=\textwidth]{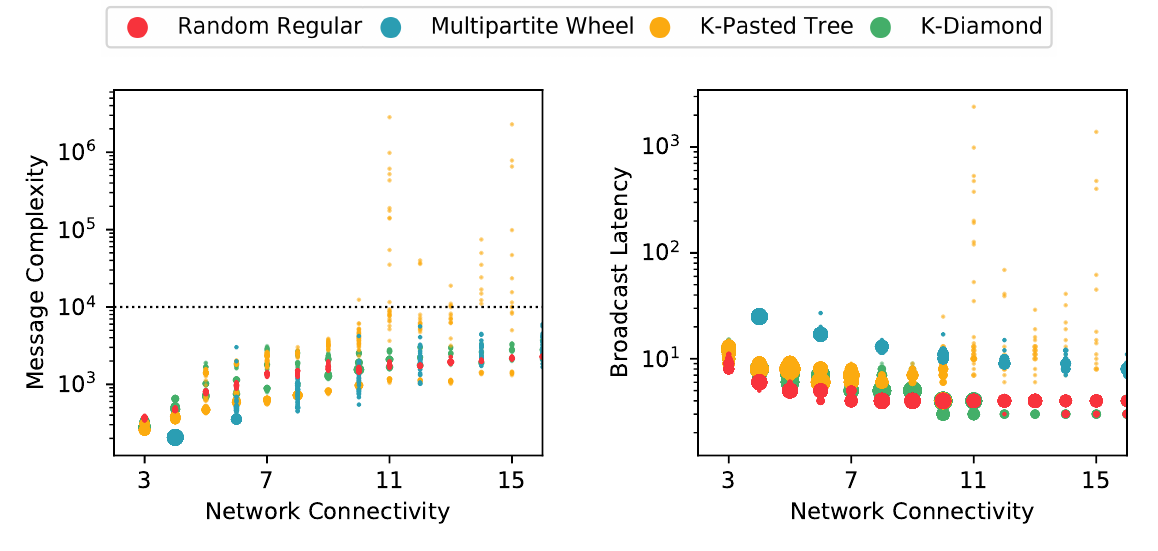}
	\caption{\textbf{Multi-Random policy} $n=100$, $f$ =$\lfloor(k -1)/2\rfloor$ passive Byzantines, bounded channels, message complexity and broadcast latency.}
	\label{fig:multirandmsg}
\end{figure}
\subsection*{Multi-shortest policy detailed evaluation}
We assume bounded channels and the Multi-Shortest policy, considering networks of size $n=150$ and $n=200$, topologies random regular, multipartite wheel, k-diamond and k-pasted-tree, passive and active Byzantine failures. First results are presented in Figures \ref{fig:multishort150all} and \ref{fig:multishort200all}.
It is possible to see that the trends of the message complexity and broadcast latency keep defined employing our protocol joined with the Multi-Shortest policy while increasing the size of the network \added{and considering passive Byzantine failures}. 
Specifically, \deleted{in both cases of passive and active Byzantine failures,} the message complexity keeps always close or below the $n^2$ boundary.
It can also be deduced that a regular network not necessarily results in optimal performances employing our protocol, indeed there are notable differences in the results obtained considering different topologies.

\begin{figure}[H]
	\centering
	\includegraphics[height=\textheight]{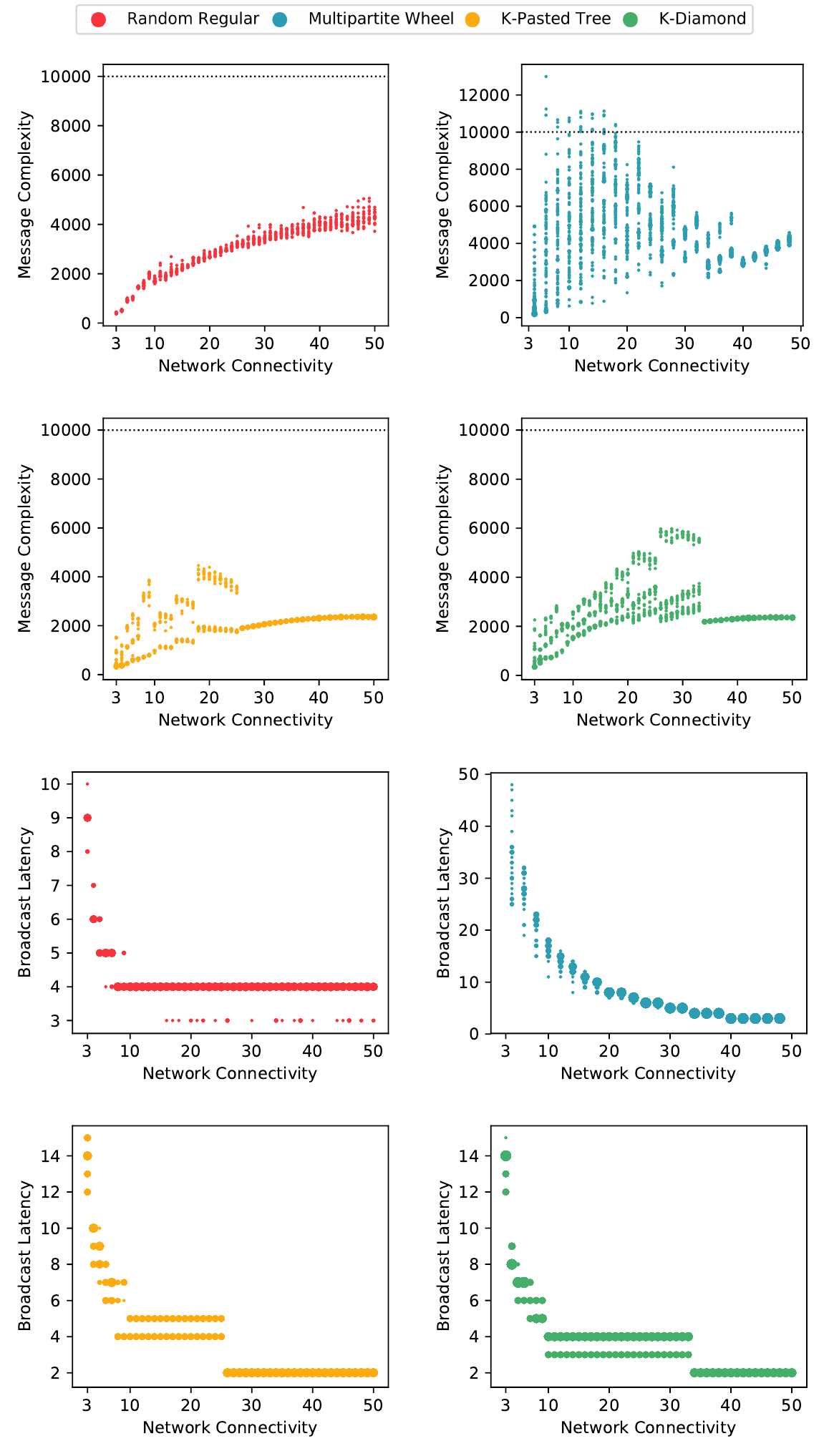}
	\caption{\textbf{Multi-Shortest policy, message complexity and broadcast latency, passive Byzantines, bounded channels, $n=100$}, $f$ =$\lfloor(k -1)/2\rfloor$.}
	\label{fig:multishort100all}
\end{figure}

\begin{figure}[H]
	\centering
	\includegraphics[height=\textheight]{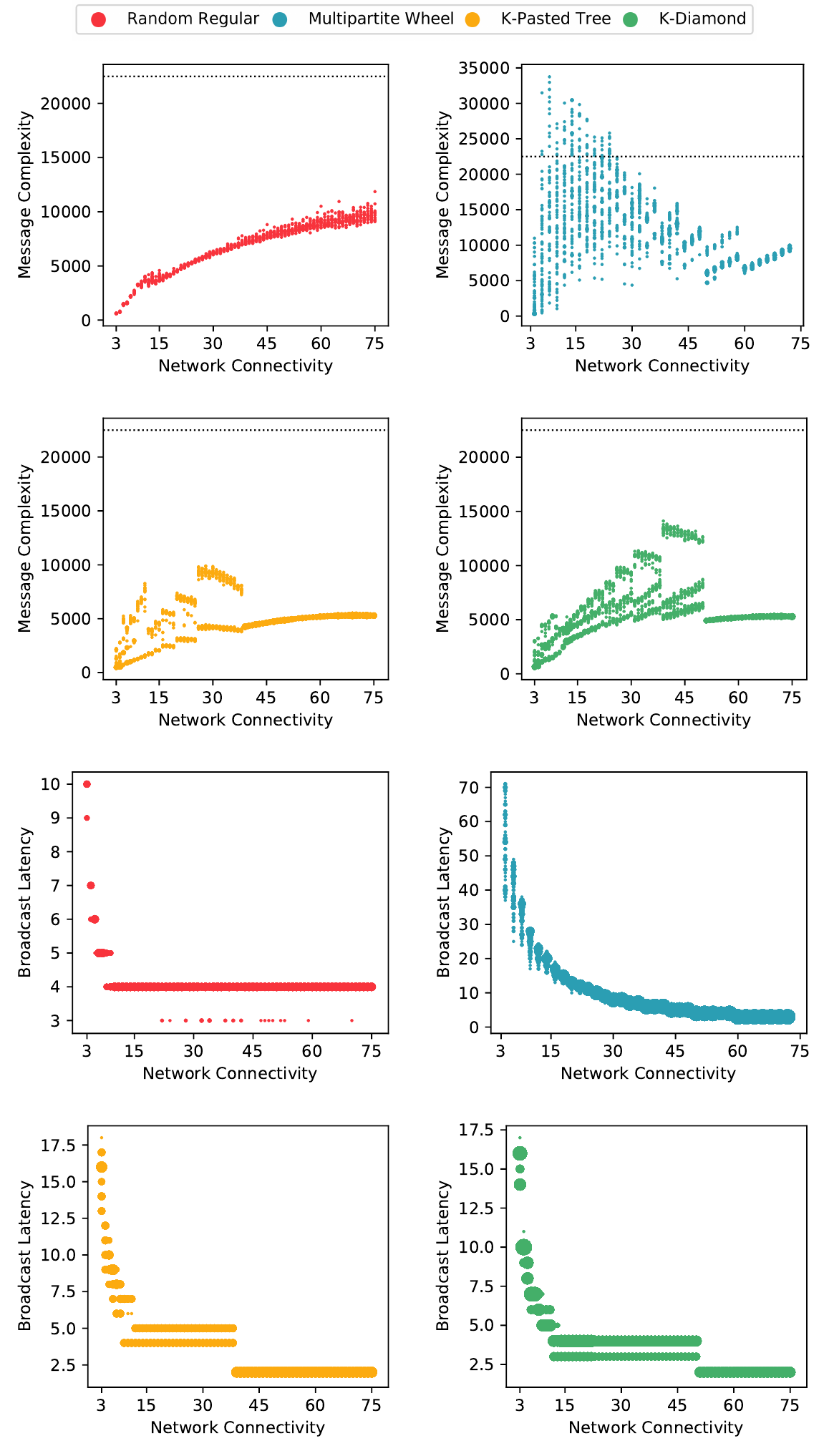}
	\caption{\textbf{Multi-Shortest policy, message complexity and broadcast latency, passive Byzantines, bounded channels}, $n=150$, $f$ =$\lfloor(k -1)/2\rfloor$.}
	\label{fig:multishort150all}
\end{figure}
\begin{figure}[H]
	\centering
	\includegraphics[height=\textheight]{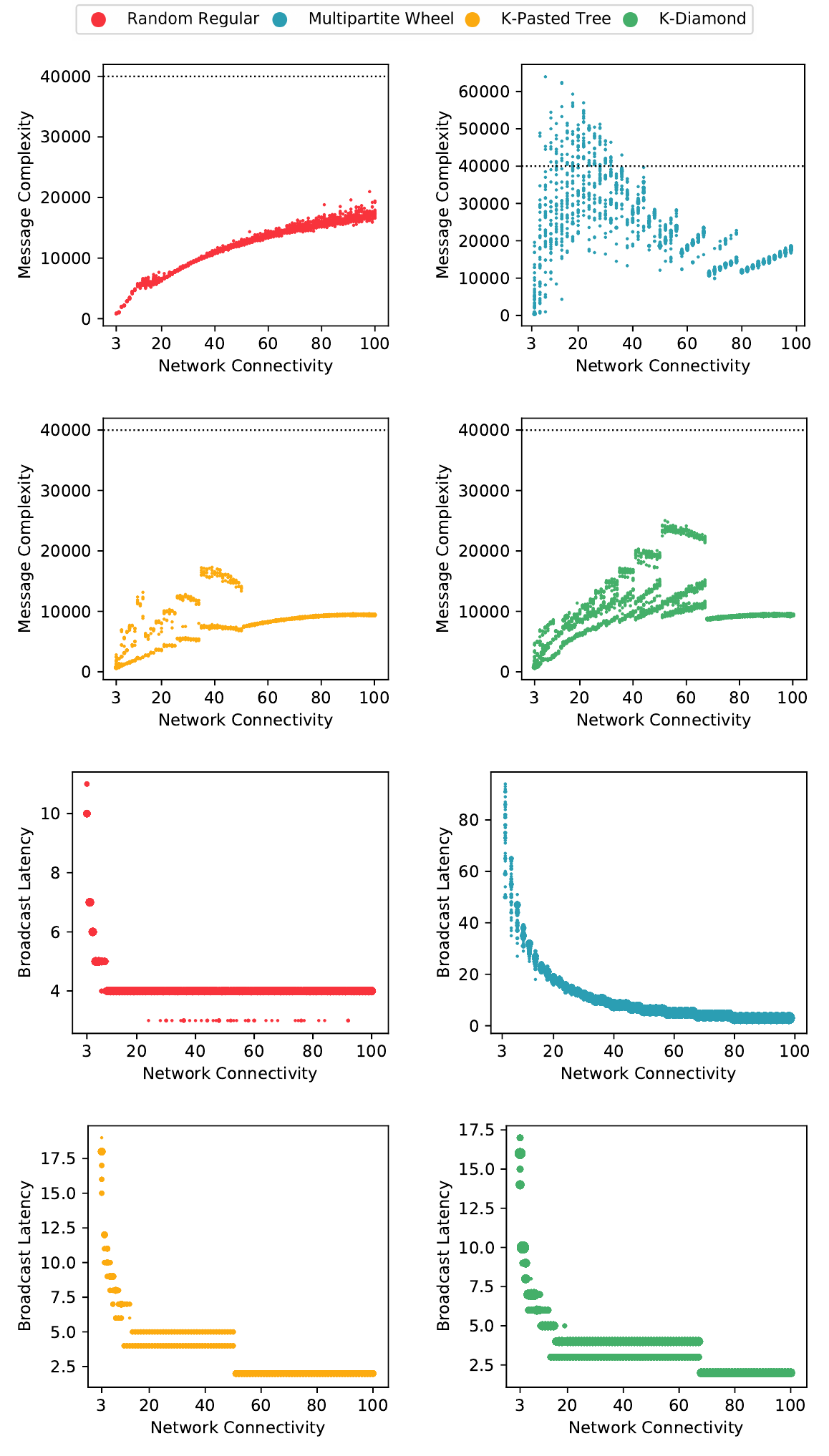}
	\caption{\textbf{Multi-Shortest policy, message complexity and broadcast latency, passive Byzantines, bounded channels}, $n=200$, $f$ =$\lfloor(k -1)/2\rfloor$.}
	\label{fig:multishort200all}
\end{figure}

\added{It can also be noticed from the distribution of the measures that there are several topologies (K-Pasted-Tree, K-Diamond and expecially Multipartite wheel)  where the placement of the source and the Byzantine failures plays a remarkable impact on the message complexity. Additional details will be later provided.}

To evaluate the effects of the Multi-Shortest policy on the broadcast latency, we simulate the \emph{BFT-BRB} protocol employing either the Multi-Shortest policy or unbounded channels, considering passive Byzantine processes and networks of size $n=100$. It can be deduced (Figure \ref{fig:latencydelay}) that the policy we defined introduces negligible delays.
\begin{figure}[!htb]
	\includegraphics[width=\textwidth]{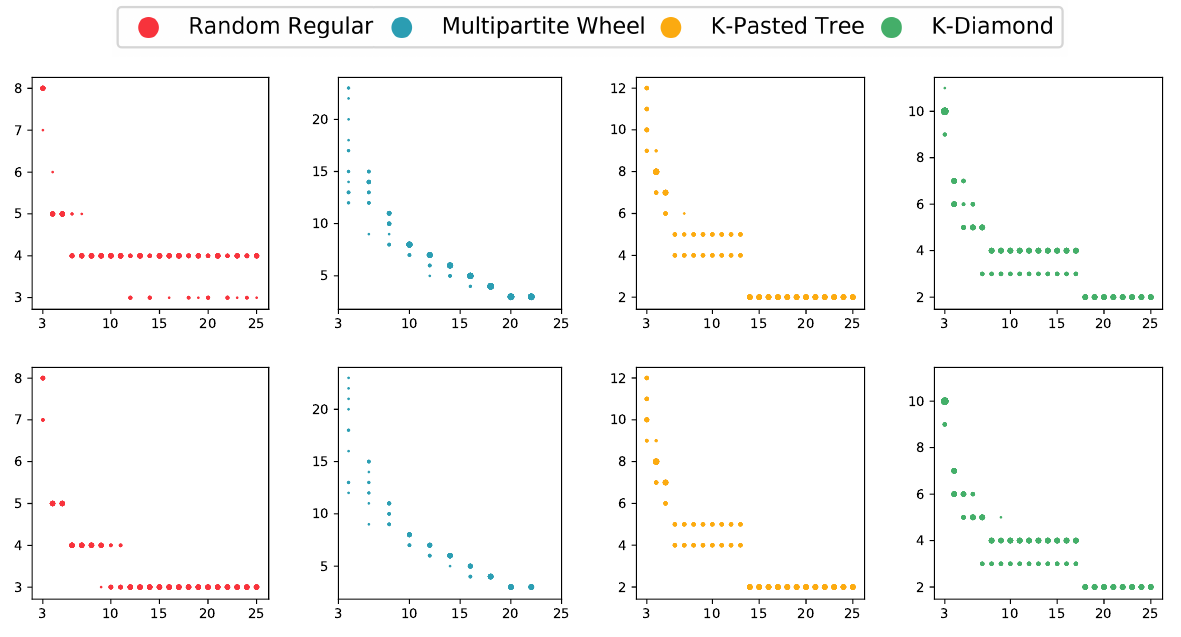}
	\caption{\textbf{Delay forwarding policy.} X-axis: network connectivity, Y-axis: broadcast latency, fist row: Multi-Shortest bounded channel, second row: unbounded channel, $n=50$, $f =\lfloor(k -1)/2\rfloor$ passive Byzantine.}
	\label{fig:latencydelay}
\end{figure}

\added{We move to consider the case of active Byzatine processes, specifically in Figure \ref{fig:multishortnonomni} general active (non omniscient) Byzantine faults are assumed. It can be noticed that, spreading spourious pathsets (using the strategy we defined) once they get knowledge about a content, the Byzantine processes have no negative impact on the message complexity. As a matter of facts they may even help correct processes achieving reliable broadcast (because they relay the content even if they try not to increase the VC on the receiving processes).}

\begin{figure}[H]
	\centering
	\includegraphics[width=.83\textwidth]{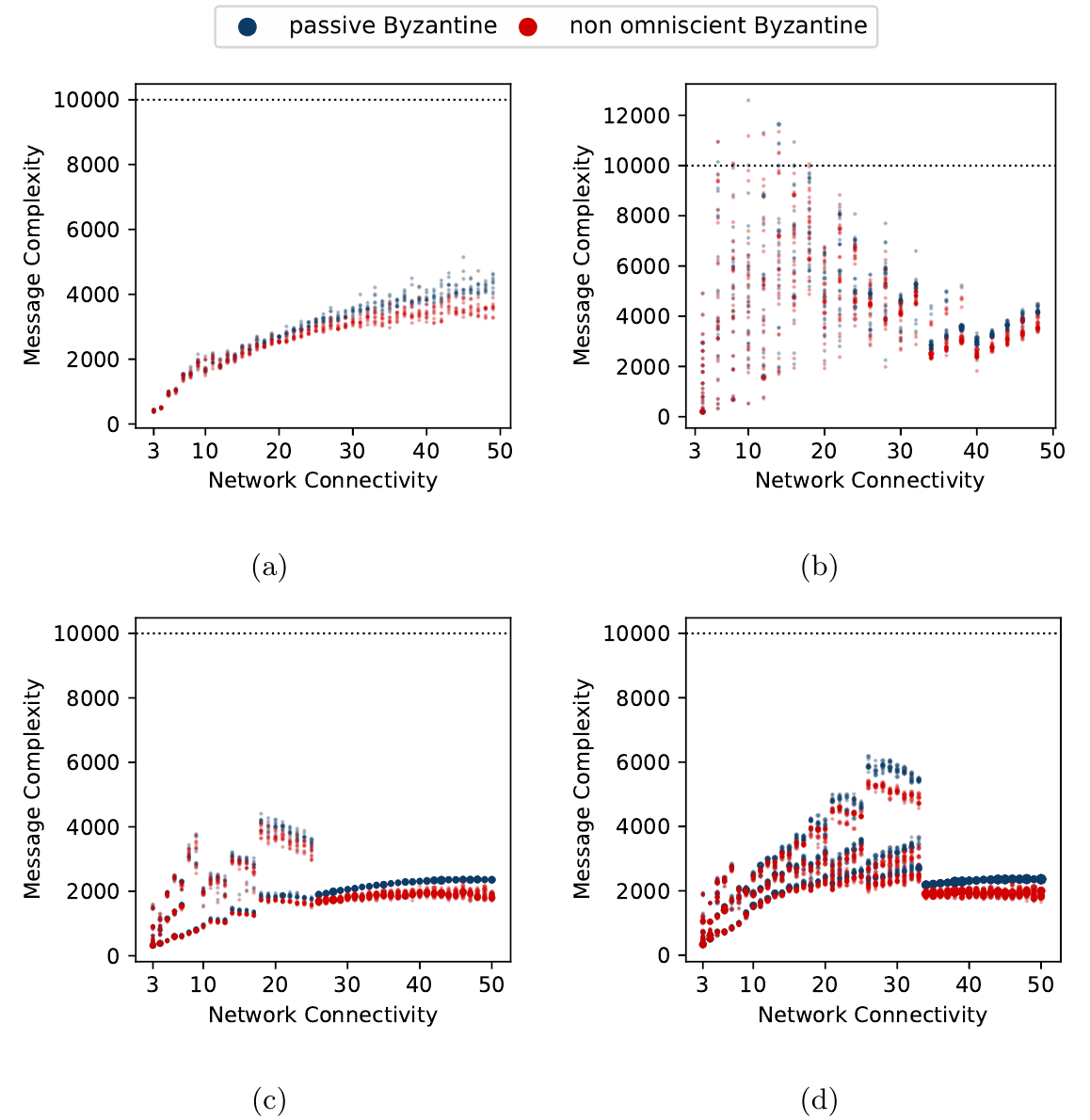}
	\caption{\textbf{Multi-Shortest policy, message complexity with either passive or non omniscient (general) Byzantine.} $n=100$, ${f =\lfloor(k -1)/2\rfloor}$, bounded channels. (a) random regular, (b) multipartite wheel (c) k-pasted-tree, (d) k-diamond.}
	\label{fig:multishortnonomni}
\end{figure}

\added{
We consider the case of omniscient Byzantine faulty processes, which start spreading spurious pathset about the content from the beginning of broadcast. The results we obtained are presented in Figures \ref{fig:multishort100byz}, \ref{fig:multishort150byz}. It is possible to see that such stronger Byzantine faults are able to remarkably increase the message complexity, nonetheless it keeps close to the $n^2$ threshold.
}

\begin{figure}[H]
	\centering
	\includegraphics[width=.8\textwidth]{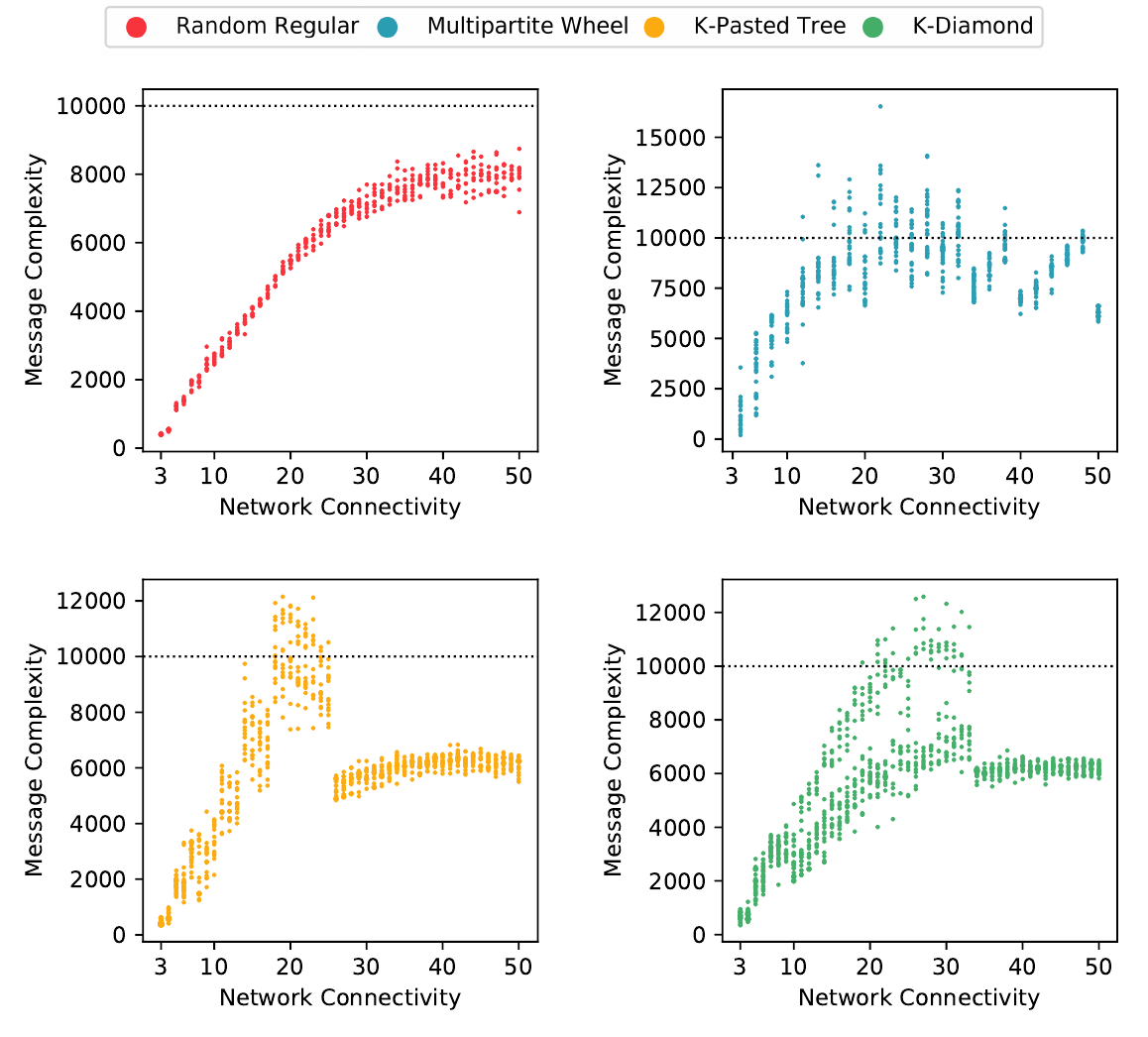}
	\caption{\textbf{Multi-Shortest policy, omniscient Byzantine faults.} $n=100$, $f =\lfloor(k -1)/2\rfloor$, bounded channels.}
	\label{fig:multishort100byz}
\end{figure}

\begin{figure}[H]
	\centering
	\includegraphics[width=.8\textwidth]{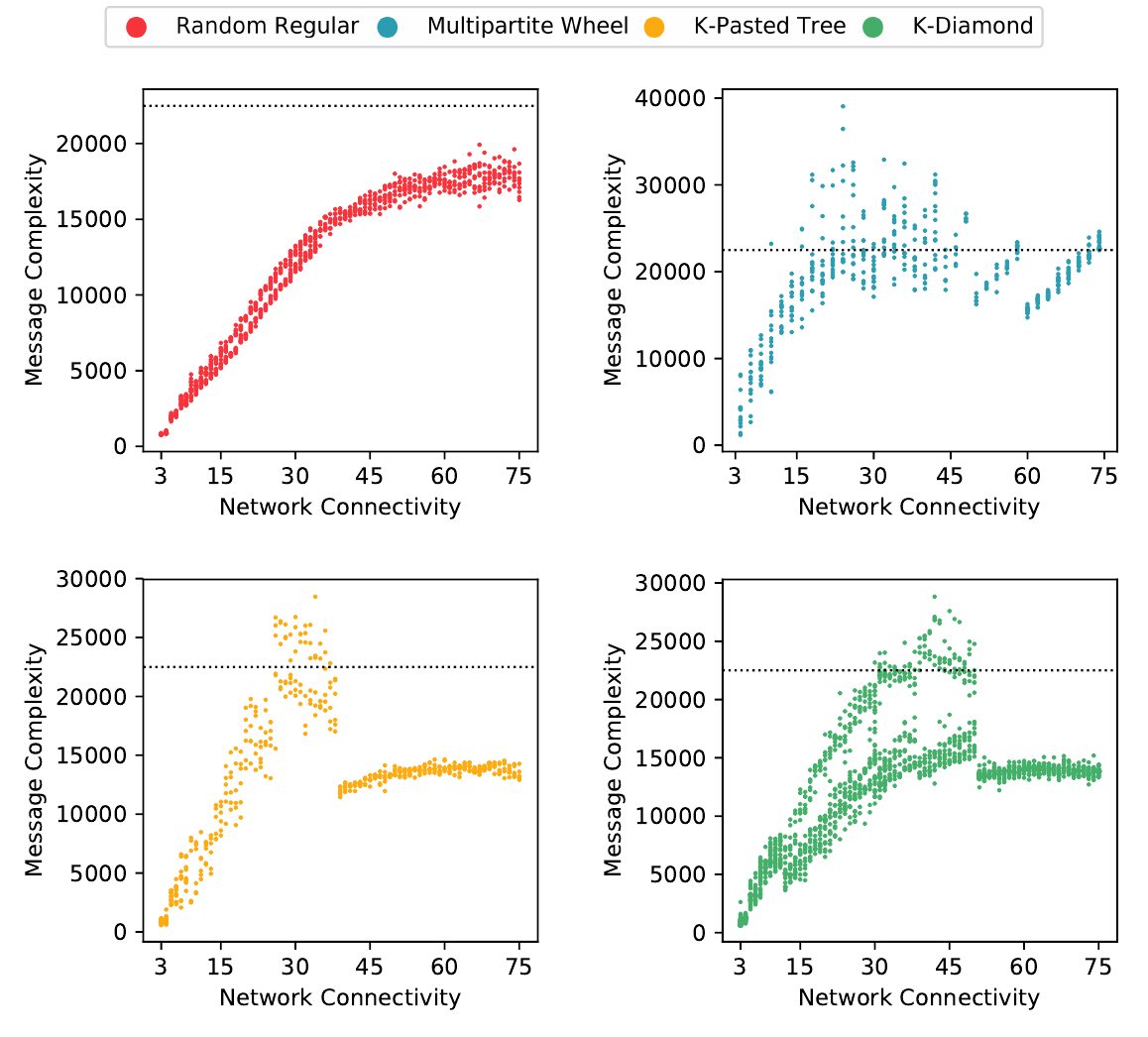}
	\caption{\textbf{Multi-Shortest policy, omniscient active Byzantine faults.} $n=150$, $f$ =$\lfloor(k -1)/2\rfloor$, bounded channels.}
	\label{fig:multishort150byz}
\end{figure}

\subsection*{Varying the number of failures}

\added{We evaluate how the message complexity evolves when the number of faulty processes is not maximized. We plotted the results we obtained in Figures \ref{fig:varyF_cut},\ref{fig:varyF_byz}. Whatever is the amout of failures, processes deliver a content only if the associated minimum cut is greater than $\lfloor(k -1)/2\rfloor$.  
	
It is possible to deduce that the resulting message complexity depends on the specific topology considered and on the degree of connectivity. Specifically, both in case of passive and omniscient active Byzantine faults, there are settings where the message complexity remains constant independently from the number of effective failures and others where the message complexity increases exponentially with the number of failures.
}
	
\begin{figure}[H]
	\centering
	\includegraphics[width=\textwidth]{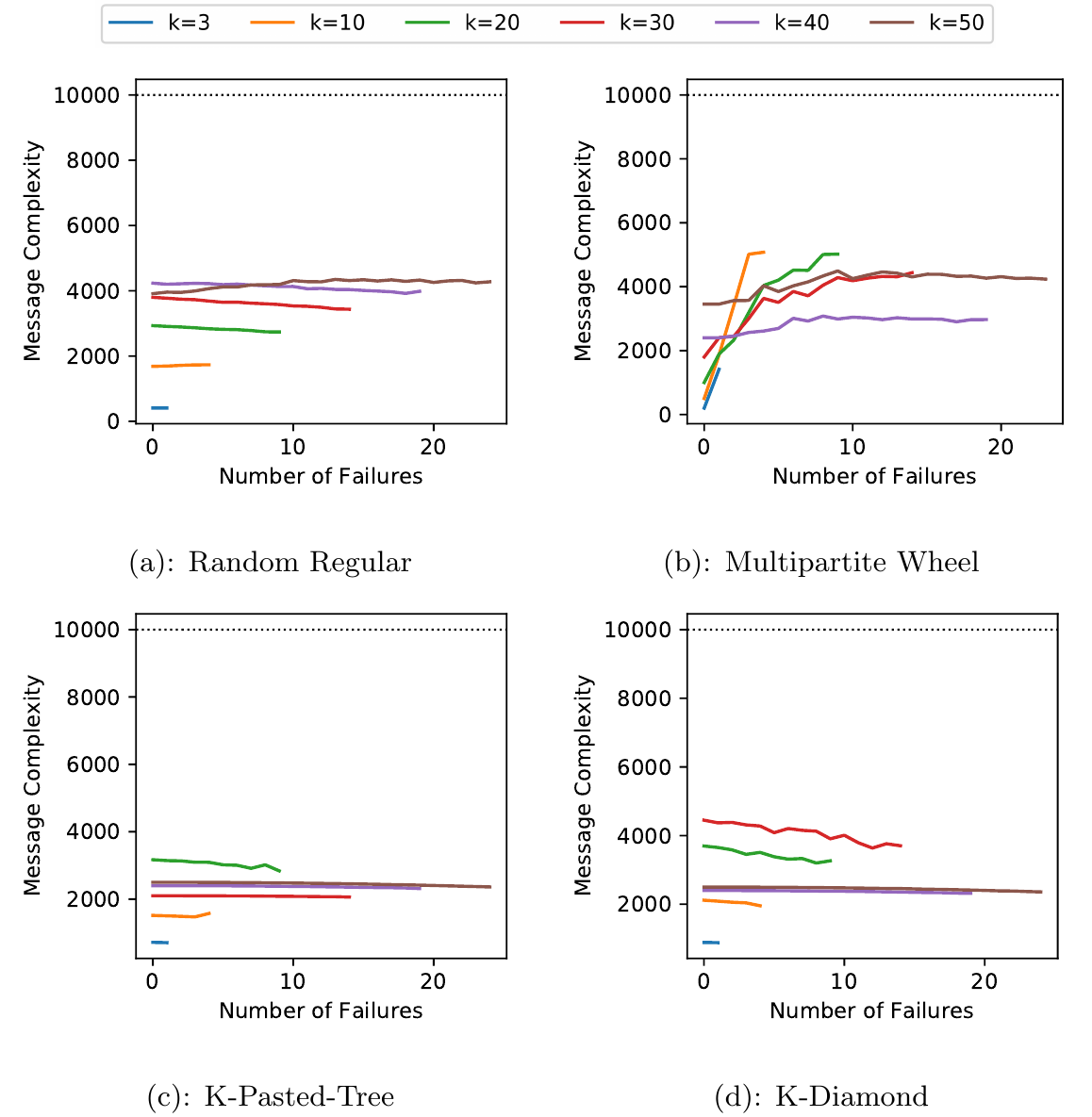}
	\caption{\textbf{Multi-Shortest policy, varying the number of faults, message complexity, passive Byzantines.} $n=100$.}
	\label{fig:varyF_cut}
\end{figure}

\begin{figure}[H]
	\centering
	\includegraphics[width=\textwidth]{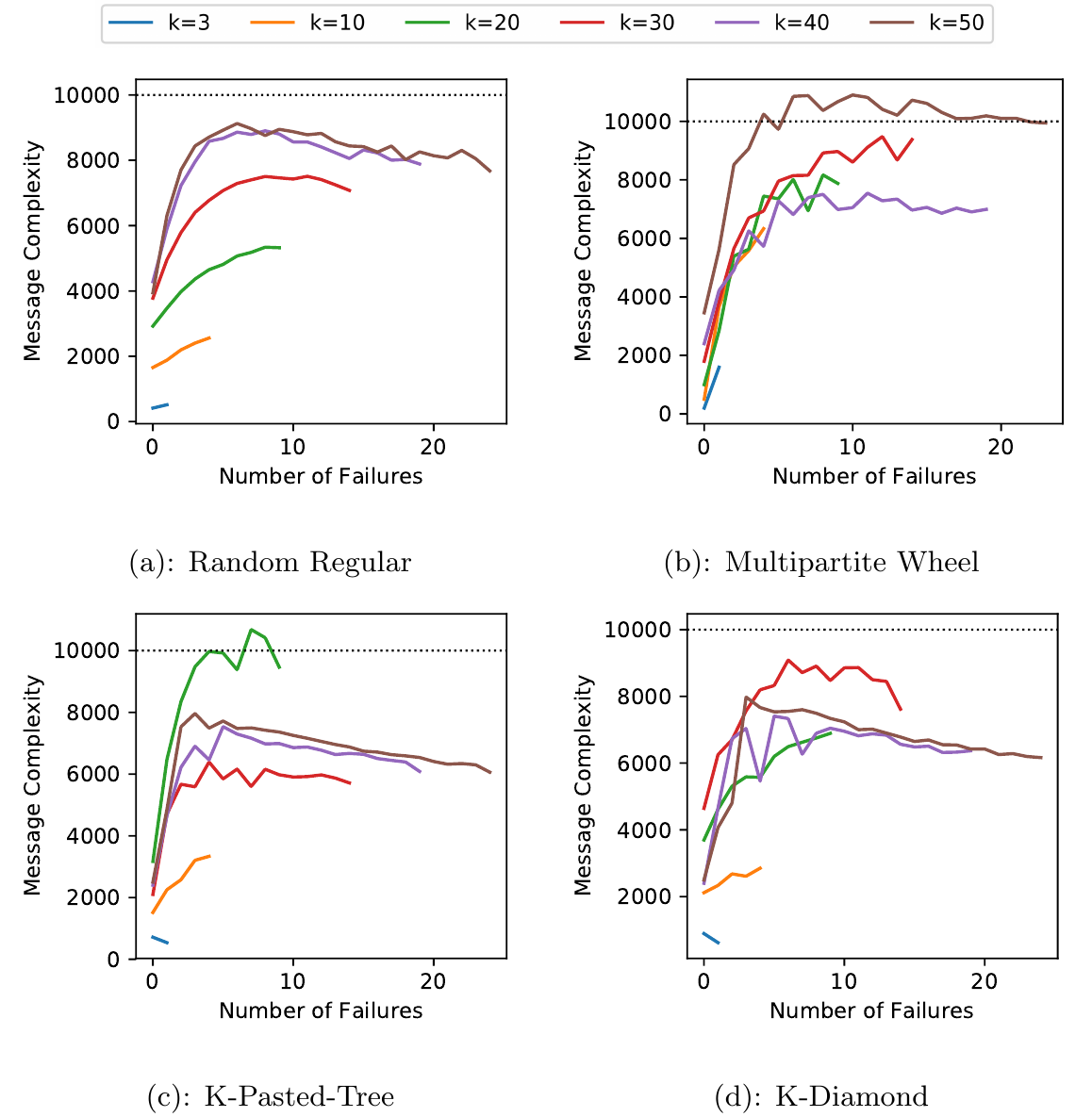}
	\caption{\textbf{Multi-Shortest policy, varying the number of faults, message complexity, active omniscient Byzantines.} $n=100$.}
	\label{fig:varyF_byz}
\end{figure}

\subsection*{Barab\'asi-Albert graph}
We separately evaluated in Figure \ref{fig:barabasimsg} our algorithm in a Barab\'asi-Albert graph while varying the attachment parameter $m$, in order analyze our protocol on a topology with different degree distribution with respect the previous analyzed. The \emph{BFT-BRB} protocol and the Multi-Shortest forwarding policy shown to keep performing in the same manner. To allow the reader to make a comparison with the other topologies, we plot in Figure \ref{fig:barabasim} the relation between the attachment parameter $m$ and the network connectivity.

\begin{figure}[!htb]
	\centering
	\includegraphics[width=.85\textwidth]{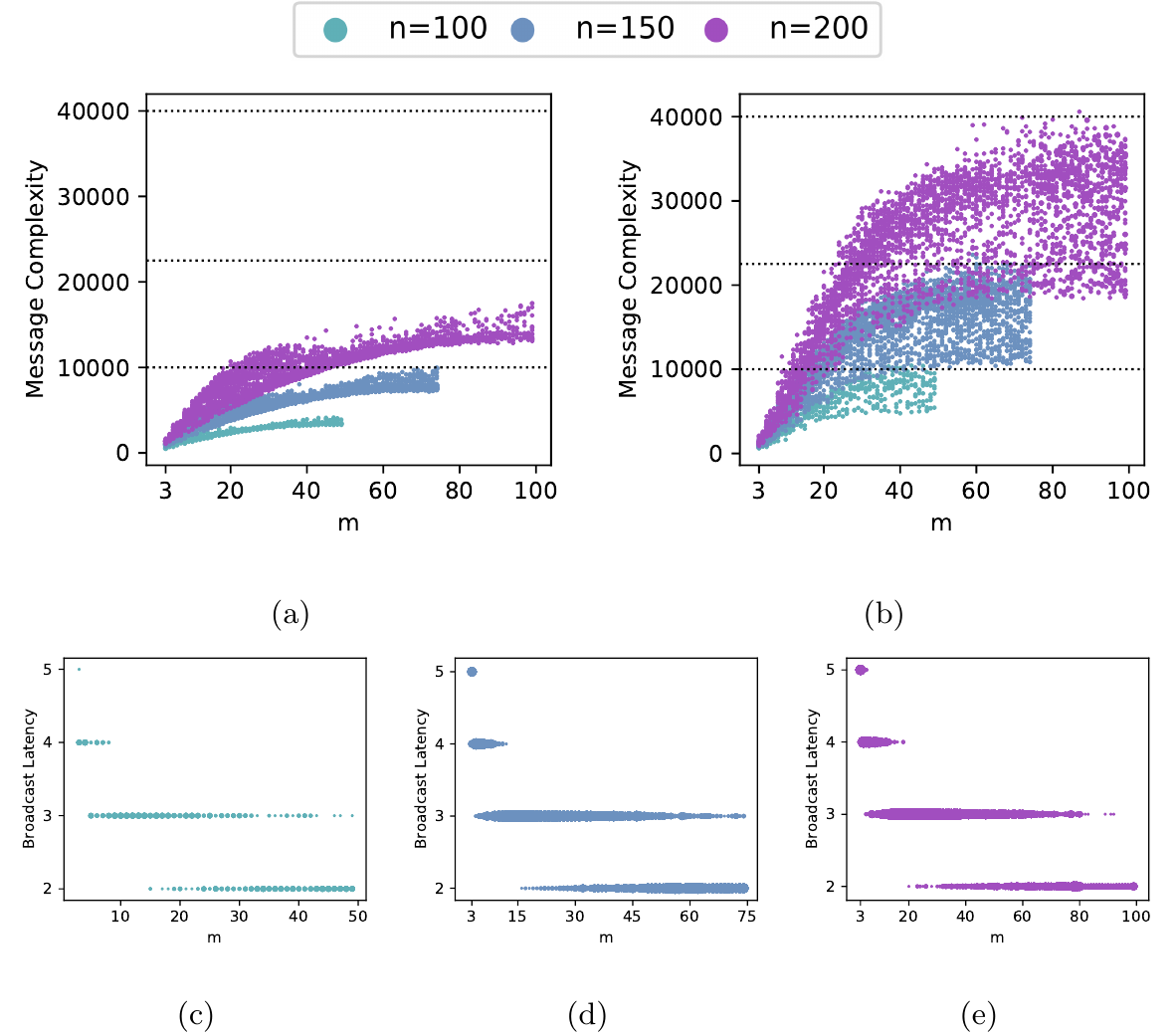}
	\caption{\textbf{Multi-Shortest policy, Barababasi-Albert Network, message complexity.}  $f$ =$\lfloor(k -1)/2\rfloor$, bounded channels, $n=100,150,200$ (a) passive Byzantines, (b) active omniscient Byzantines.}
	\label{fig:barabasimsg}
\end{figure}

\begin{figure}[!htb]
	\centering	
	\includegraphics[width=.4\textwidth]{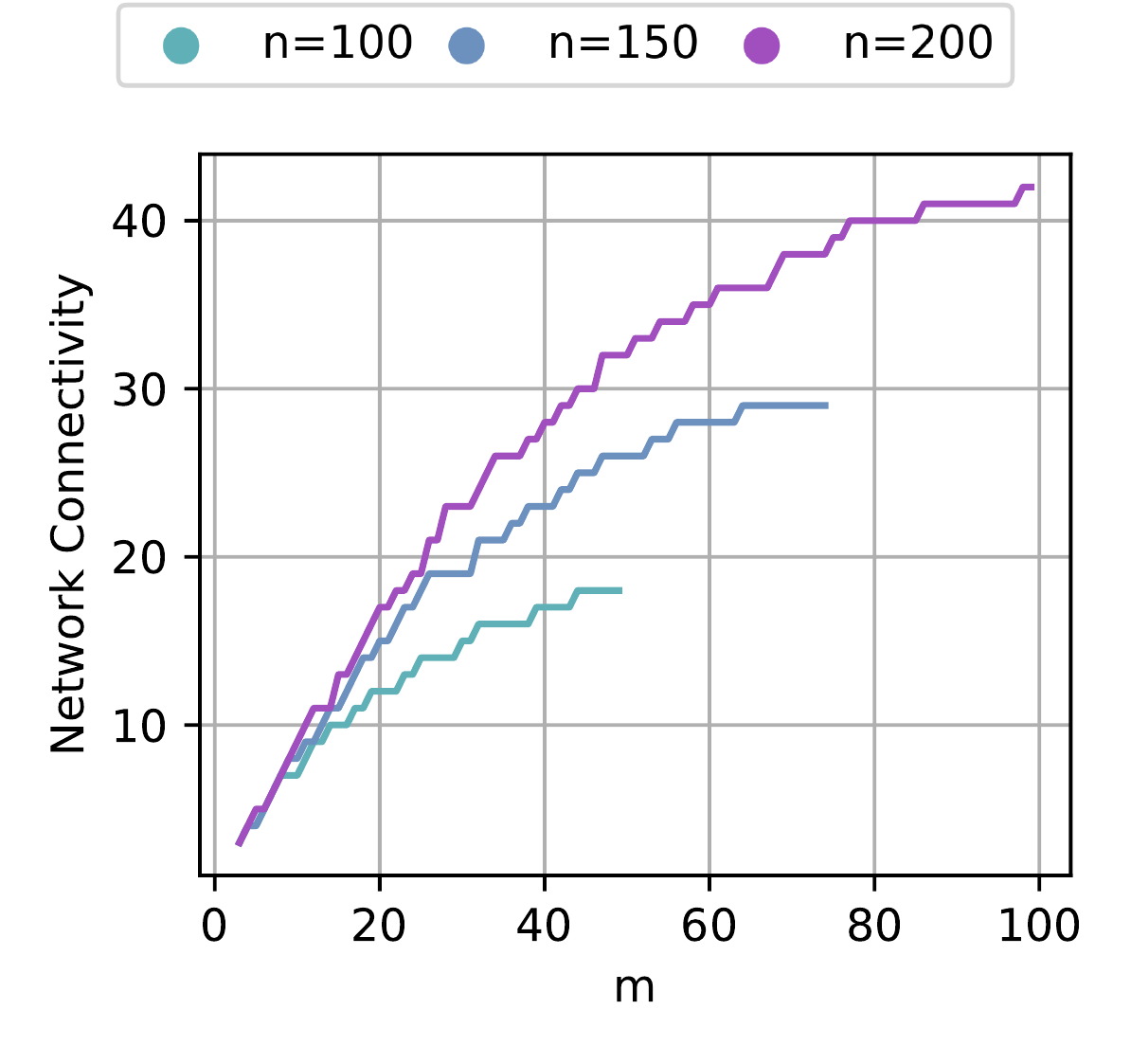}
	\caption{\textbf{Barababasi-Albert Network, relation between the attachment parameter $m$ and the network connectivity}}
	\label{fig:barabasim}
\end{figure}

These simulations allow us to conclude that a Byzantine tolerable reliable broadcast protocol practically employable in synchronous systems without considering further assumptions with respect the state of art is achievable.

\subsection*{Worst Case Scenarios}
For the ease of completeness, we briefly survey two worst case scenarios: the multipartite wheel and the generalized wheel.
In Figure \ref{fig:worstcase}a is summarized one of executions we are going to present. Let us consider the multipartite wheel of size $n=21$ and $k=6$, choose a node as source (in Figure \ref{fig:worstcase}a depicted in orange) and place two faulty processes (in red) in its neighborhood in distinct \replaced{groups (i.e. those neighbor will have different neighbors)}{sets}.  It results that only two correct processes per \replaced{group}{set} delivers the content during the first round. Subsequently, they relay the message to all the nodes in the consecutive  \replaced{group}{sets}. But, none of this node is able to deliver the message: the minimum cut of the generated paths is 2 and processes demand paths with minimum cut at least 3. The nodes succeed in delivering the message only when ``the propagation on the two sides met", achieving a minimum cut of 4. It can be noticed that a considerable amount of paths may be generated in this specific worst case scenario while the values of $n$ and $k$ increases. Nonetheless, the \emph{BFT-BRB} protocol and the Multi-Shortest policy reduced such a message complexity case as shown in Figures \ref{fig:multishort100all}, \ref{fig:multishort150all} and \ref{fig:multishort200all}. \added{We additionally simulate in Figure \ref{fig:complexworstgwheel}a our protocol with the Multi-Shortest policy on a multipartite wheel of size $n=100$ with passive Byzantine in the worst placement.}

\begin{figure}[!htb]
	\centering
	\includegraphics[width=\textwidth]{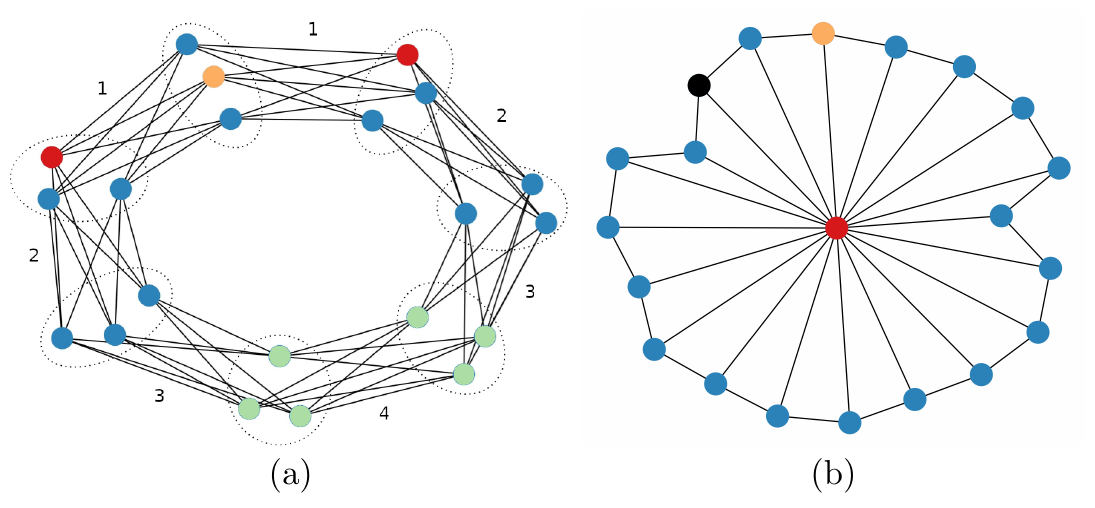}
	\caption{\textbf{Worst case scenarios.} Multi-Partite Wheel (a) and Generalized Wheel (b).}
	\label{fig:worstcase}
\end{figure}

\begin{figure}[!htb]
	\centering
	\includegraphics[width=\textwidth]{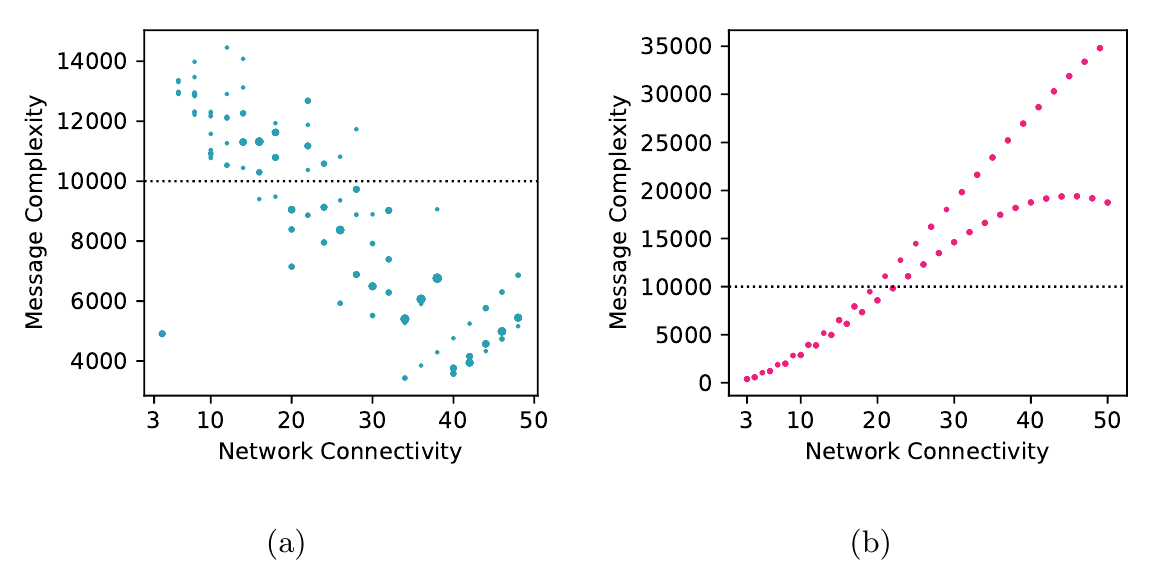}
	\caption{\textbf{Multi-Shortest policy, generalized wheel, worst Byzantine placementm passive Byzantine.} (a) multipartite wheel, (b) generalized wheel.}
	\label{fig:complexworstgwheel}
\end{figure}

Another worst case scenario is depicted in Figure \ref{fig:worstcase}b. Let us assume a generalized wheel, pick a source on the cycle and the Byzantine processes always located on the clique. The Figure \ref{fig:complexworstgwheel}b show that in this specific case our algorithm and the Multi-Shortest policy are less effective in reducing the message complexity while Byzantine processes are located in the clique.

\section*{Conclusion}
We revisited available solutions for the reliable broadcast in general network hit by up to $f$ arbitrarily distributed Byzantine failures, and proposed modifications following performance related observations. Although the delivery complexity of our protocol remains unchanged with respect the state-of-art solutions, our experiments show that it is possible to drastically reduce the message complexity (from factorial to polynomial in the size of the network), practically enabling reliable broadcast in larger systems and networks with authenticated channels.
There are several open problems that may follows: is it possible to define a solution to the hitting set problem suited for the specific input generated by our protocol? Is it possible to remove from the system the contents generated by Byzantine processes? And under which assumption? \added{Which are the graph parameters that govern the message complexity of our protocol?}
Our results open to the possibility of identifying a polynomial theoretical bound on message complexity solving the reliable broadcast problem \added{with honest dealer}.
Finally, the Bizantine Reliable Broadcast problem should be analyzed also on dynamic networks. Even if the protocol we proposed can directly be employed on asynchronous and/or dynamic systems, the achieved gain in message complexity is not guaranteed due to the weaker synchrony assumptions, and probably specific assumption on the evolution of the system must be guaranteed in searching a practical employable solution.

\bibliographystyle{spmpsci}      
\bibliography{references}   

%
\end{document}